\newtheorem{remark}{Remark}
\newtheorem{definition}{Definition}
\newtheorem{thm}{Theorem}[section]
\newtheorem{lem}{Lemma}[section]
\def\e{\mbox{\boldmath $e$}}
\def\f{\mbox{\boldmath $f$}}
\def\g{\mbox{\boldmath $g$}}
\def\h{\mbox{\boldmath $h$}}
\def\m{\mbox{\boldmath $m$}}
\def\p{\mbox{\boldmath $p$}}
\def\q{\mbox{\boldmath $q$}}
\def\x{\mbox{\boldmath $x$}}
\def\0{\mbox{\boldmath $0$}}
\newcommand{\lemref}[1]{\textbf{Lemma~\ref{#1}}}
\newcommand{\thmref}[1]{\textbf{Theorem~\ref{#1}}}
\newcommand{\secref}[1]{\S\ref{#1}}
\definecolor{newcolor}{rgb}{.8,.349,.1}
\definecolor{ggreen}{rgb}{0.0,0.5,0.0}
\colorlet{bblue}{blue!50!black}
\begin{document}

\begin{frontmatter}

\title{Second-order semi-implicit projection methods for micromagnetics simulations}

\author[1]{Changjian Xie}
\ead{20184007005@stu.suda.edu.cn}

\author[2,3]{Carlos J. Garc\'{i}a-Cervera}
\ead{cgarcia@math.ucsb.edu}

\author[c]{Cheng Wang}
\ead{cwang1@umassd.edu}

\author[d]{Zhennan Zhou\corref{cor1}}
\ead{zhennan@bicmr.pku.edu.cn}

\author[1,e]{Jingrun Chen\corref{cor1}}
\cortext[cor1]{Corresponding authors.}
\ead{jingrunchen@suda.edu.cn}

\address[1]{School of Mathematical Sciences, Soochow University, Suzhou, 215006, China.}
\address[2]{Department of Mathematics, University of California, Santa Barbara, CA 93106, USA.}
\address[3]{Basque Center for Applied Mathematics, Mazarredo 14, E48009 Bilbao, Basque Country, Spain.}
\address[c]{Mathematics Department, University of Massachusetts Dartmouth, North Dartmouth, MA 02747, USA.}
\address[d]{Beijing International Center for Mathematical Research, Peking University, Beijing, China.}
\address[e]{Mathematical Center for Interdisciplinary Research, Soochow University, Suzhou, 215006, China.}

\begin{abstract}
Micromagnetics simulations require accurate approximation of the magnetization dynamics described by the Landau-Lifshitz-Gilbert equation,
which is nonlinear, nonlocal, and has a non-convex constraint, posing interesting challenges in developing numerical methods. In this paper,
we propose two second-order semi-implicit projection methods based on the second-order backward differentiation formula
and the second-order interpolation formula using the information at previous two temporal steps. Unconditional unique solvability
of both methods is proved, with their second-order accuracy verified through numerical examples in both 1D and 3D.
The efficiency of both methods is compared to that of  another two popular methods.
In addition, we test the robustness of both methods for the first benchmark problem with a ferromagnetic thin film material 
from National Institute of Standards and Technology.
\end{abstract}

\begin{keyword}
	Micromagnetics simulation \sep Landau-Lifshitz-Gilbert equation \sep backward differentiation formula\sep second-order accuracy
\sep hysteresis loop
	\MSC[2010] 35K55\sep 35Q70\sep 65N06\sep 65N12
\end{keyword}

\end{frontmatter}

\section{Introduction}

Electrons in a material pose local magnetic orders, but typically do not exhibit a
spontaneous macroscopic magnetic ordering unless a collective motion of these local magnetic orders is present. This results in a net magnetization even in the absence of an external magnetic field. Such a material is called a ferromagnet. It has binary stable
configurations, which makes it an ideal material for data recording and storage.
Recent advances in experiment and theory \cite{ZuticFabianDasSarma:2004} have demonstrated
effective and precise control of ferromagnetic configurations by means of external fields.

A very common phenomenological model for magnetization dynamics is  the Landau-Lifshitz-Gilbert (LLG) equation
\cite{LandauLifshitz:1935, Gilbert:1955}. This model has been successfully used to interpret various experimental observations. 
The LLG equation is technically quasilinear, nonlocal and has a non-convex constraint, which poses interesting challenges in designing efficient numerical methodologies. 
In addition, the magnetization reversal process requires numerical methods to resolve different length and temporal scales in the presence of domain walls 
and vortices, due to their important roles in the switching process \citep{Shi1999mag, Shi2000geometry}. Therefore, numerical methods
for the LLG equation with high accuracy and efficiency are highly demanding.

There has been a continuous progress of developing numerical algorithms in the past few decades; see for example \citep{Kruzik:2006,cimrak2007survey} and references therein.
The spatial derivative is typically approximated by the finite element method (FEM) \citep{Abert2014, MR3260281, PageDecoupled:2014, Aurada:2015, KimPage2015, BanasPage:2015, Praetorius:2018, alouges2006convergence, AlougesFEM:2008, Alouges2014, AlougesStoch2014} and the finite difference method \citep{weinan2001numerical, fuwa2012finite,Jeong2010613,KIM2017109,ROMEO2008464}.
As for the temporal discretization, explicit schemes \citep{alouges2006convergence,jiang2001hysteresis}, 
fully implicit schemes \citep{Prohl2001Computational, bartels2006convergence,fuwa2012finite}, and semi-implicit schemes
\citep{weinan2001numerical,wang2001gauss,Garcia2001Improved,lewis2003geometric, cimrak2005error, gao2014optimal, Boscarino2016High} have been extensively studied.
Explicit schemes suffer from severe stability constraints. Fully implicit schemes can overcome this severe stability constraints. However, a (nonsymmetric) 
nonlinear system of equations needs to be solved at each time step, which is time-consuming. 
A nonlinear multigrid method is used to handle the nonlinearity at each time step in \citep{jeong2014accurate},
and the fixed point iteration technique is used to deal with the nonlinearity in \citep{cimrak2004iterative}.
In \citep{fuwa2012finite}, the existence and uniqueness of a solution to the nonlinear system is proved under the condition that the temporal stepsize be proportional to the square of the spatial gridsize.
This, however, is contrary to the unconditional stability of implicit schemes.

Semi-implicit schemes achieve a desired balance between stability and efficiency. One of the most popular methods is the Gauss-Seidel projection method (GSPM)
developed by Wang, Garc\'ia-Cervera, and E \citep{weinan2001numerical,wang2001gauss,Garcia2001Improved}. This method is based on a combination of a Gauss-Seidel 
implementation of a fractional step implicit solver for the gyromagnetic term, and the projection method for the heat flow 
of harmonic maps to overcome the difficulties associated with the stiffness and nonlinearity.
Only several linear systems of equations need to be solved at each step, whose complexity is comparable to solving the scalar 
heat equation implicitly. It is tested that GSPM is unconditionally stable with first-order accuracy in time.
%Another popular method is the implicit-explicit scheme \cite{Boscarino2016High}, which is unconditionally stable. 
In order to get second-order accuracy in time, two nonsymmetric linear systems of equations need to be solved at each step.
Note that a projection step is needed to preserve the pointwise length constraint.

In this work, we propose two second-order semi-implicit projection methods for LLG equation
based on the second-order backward differentiation formula
and the second-order interpolation formula using the information at previous two temporal steps. The unconditional unique solvability
of both methods is proved, with their second-order accuracy verified through numerical examples in both 1D and 3D.
The efficiency of both methods is compared to that of another two popular schemes in the literature.
In addition, we test the robustness of both methods using the first benchmark problem for a ferromagnetic thin film material
developed by the micromagnetic modeling activity group from National Institute of Standards and Technology (NIST).
It is worth mentioning that a modification of the proposed method has been proved to be second-order accurate in time
under mild conditions \cite{jingrun2019analysis}.

The rest of the paper is organized as follows. In \Cref{section:method}, we first introduce the micromagnetics model based on the LLG equation. 
The second-order semi-implicit projection methods are described in \Cref{section:scheme} with their unique solvability given in \Cref{section:solvability}. 
The calculation of the demagnetization field (stray filed) is given in \Cref{section:stray}. 
Numerical results in \Cref{section:numerical1} are used to test the accuracy and the efficiency of the proposed methods in both 1D and 3D. 
Moreover, the first benchmark for a ferromagnetic thin film material developed by the micromagnetics modeling activity group from NIST is used to check the applicability of the proposed methods in \Cref{section:numerical2}.
Conclusions are drawn in \Cref{section:conclusion}.

\section{Second-order semi-implicit methods}\label{section:method}

\subsection{Landau-Lifshitz-Gilbert equation}\label{model}

The magnetization dynamics in a ferromagnetic material are described by the LLG equation~\cite{LandauLifshitz:1935,Brown1963micromagnetics}, which take the following nondimensionalized form:
\begin{align}\label{c1}
{\m}_t =-{\m}\times{\bm h}_{\text{eff}}-\alpha{\m}\times({\m}\times{\bm h}_{\text{eff}}),
\end{align}
with
\begin{equation}\label{boundary}
\frac{\partial{\m}}{\partial {\bf \nu}}\Big|_{\Gamma}=0
\end{equation}
on $\Gamma=\partial \Omega$.
Here the magnetization ${\m}\,:\,\Omega\subset\mathbb{R}^d\to S^2,d=1,2,3 $ is a three-dimensional vector field
with a pointwise constraint $|\m|=1$ and ${\bf \nu}$ is the unit outward normal vector. $\Omega$ is a bounded domain
occupied by the ferromagnetic material. The first term on the right hand side in \cref{c1} is the gyromagnetic
term and the second term is the damping term with $\alpha>0$ being the dimensionless damping coefficient.

The effective field ${\bm h}_{\text{eff}}$ consists of the exchange field, the anisotropy field, the external field $\h_e$
and the demagnetization or stray field $\h_s$. For a uniaxial material,
\begin{align}
    {\bm h}_{\text{eff}} =\epsilon\Delta\m-Q(m_2\e_2+m_3\e_3)+\h_s+\h_e.
\end{align}
Here, the dimensionless parameters are $Q=K_u/(\mu_0 M_s^2)$ and $\epsilon=C_{ex}/(\mu_0 M_s^2L^2)$ with $L$ the diameter of the ferromagnetic body, $K_u$ the anisotropy constant, $C_{ex}$ the exchange constant, $\mu_0$ the permeability of vacuum, and $M_s$ the saturation magnetization.
${\bm e}_2=(0,1,0)$, ${\bm e}_3=(0,0,1)$ and $\Delta$ denotes the standard Laplacian operator.
$\h_e$ is the applied (external) magnetic field and the detailed description of $\h_s$ will be given in \secref{section:stray}.
Typical values of the physical parameters for Permalloy are included in \Cref{tab}.
\begin{table}[htbp]
	\centering
	\caption{Typical values of the physical parameters for Permalloy, which is an alloy of Nickel (80\%) and Iron (20\%) frequently used in magnetic storage devices.}\label{tab}
	\begin{tabular}{|c|p{5cm}<{\centering}|}
		\hline
		\multicolumn{2}{|c|}{\bf Physical Parameters for Permalloy}\\
		\hline
		$K_u$ & $1.0\times10^2\;\mathrm{J/m^3}$ \\
		\hline
		$C_{ex}$& $1.3\times10^{-11}\;\mathrm{J/m}$ \\
		\hline	
		$M_s$ & $8\times 10^5\;\mathrm{A/m}$ \\
		\hline
		$\mu_0$ & $4\pi \times 10^{-7}\;\mathrm{N/A^2}$ \\
		\hline
		$\alpha$ & $0.01$ \\
		\hline
	\end{tabular}
\end{table}
For brevity, we define
\begin{align}\label{eq-4}
\f=-Q(m_2\e_2+m_3\e_3)+\h_s+\h_e.
\end{align}
and rewrite \cref{c1} as
\begin{align}\label{eq-5}
\m_t=-\m\times(\epsilon\Delta\m+\f)-\alpha\m\times\m\times(\epsilon\Delta\m+\f).
\end{align}
It is easy to check that the following equation
\begin{equation}\label{eqq-7}
(1-\alpha{\bm m}\times){\bm m}_t=-(1+\alpha^2){\bm m}\times(\epsilon\Delta\m+\f)
\end{equation}
is equivalent to \eqref{eq-5} since $|\m|=1$.

\subsection{Second-order semi-implicit projection methods}\label{section:scheme}

Denote the temporal step-size by $k$, the spatial mesh size by $h$,  the standard second-order centered difference for Laplacian operator by $\Delta_h$, and $t^n=nk$, $n\leq \left\lfloor\frac{T}{k}\right\rfloor$
with $T$ the final time. For convenience, we use the finite difference method to approximate the spatial derivatives in \cref{eq-5,eqq-7}.
For the temporal discretization, we employ the second-order backward differentiation formulas (BDFs) to approximate the temporal derivative
\begin{equation}
\frac{\partial}{\partial t}\m_h^{n+2}\approx\frac{\frac{3}{2}\m_h^{n+2}-2\m_h^{n+1}+\frac{1}{2}\m_h^n}{k}.
\end{equation}
Such a discretization results the following fully implicit scheme for \cref{eq-5}:
\begin{align}\label{2point5}
\frac{\frac{3}{2}\m_h^{n+2}-2\m_h^{n+1}+\frac{1}{2}\m_h^n}{k}&=-\m_h^{n+2}\times \big(\epsilon \Delta_h\m_h^{n+2} +\f_h^{n+2} \big)\\
&-\alpha \m_h^{n+2} \times\big(\m_h^{n+2}\times(\epsilon \Delta_h\m_h^{n+2}+\f_h^{n+2}) \big)\nonumber.
\end{align}
As expected, at each time step, a nonlinear system of equations needs to be solved in \cref{2point5}. Moreover,
the nonsymmetric structure of the system introduces additional difficulties.

To overcome this severe difficulty, we approximate the nonlinear prefactors in front of the discrete Laplacian term
using the information from previous time steps (one-sided interpolation) with its accuracy the same as the corresponding
BDF scheme. For \eqref{2point5}, we have
\begin{align}
\frac{\frac{3}{2}{\m}_h^{n+2}-2{\m}_h^{n+1}+\frac{1}{2}{\m}_h^n}{k}
&= -\hat{\m}_h^{n+2}\times\big(\epsilon \Delta_h\m_h^{n+2} +\hat{\f}_h^{n+2} \big) \nonumber\\
&  -\alpha\hat{\m}_h^{n+2}\times\left(\hat{\m}_h^{n+2}\times(\epsilon \Delta_h\m_h^{n+2} +\hat{\f}_h^{n+2} ) \right),
\end{align}
where
\begin{align}
    \hat{\m}_h^{n+2} &=2{\m}_h^{n+1}-{\m}_h^n, \label{m_hat}\\
    \hat{\f}_h^{n+2} &=2{\f}_h^{n+1}-{\f}_h^n,
\end{align}
and $\f_h^{n}=-Q(m_2^n\e_2+m_3^n\e_3)+\h_s^n+\h_e^n$. However, such a scheme cannot preserve the magnitude of magnetization,
we therefore add a projection step and obtain the following scheme for \cref{eq-5}
\begin{equation*}\label{ccc72}
\textrm{\bf Scheme A }	\left\{ 
	    \begin{aligned}
	    &\frac{\frac32 {\m}_h^{n+2,*} - 2 {\m}_h^{n+1} + \frac12 {\m}_h^n}{k}
=  - \hat{\m}_h^{n+2} \times\big(\epsilon \Delta_h\m_h^{n+2,*} +\hat{\f}_h^{n+2} \big) \\
	    &\quad - \alpha \hat{\m}_h^{n+2} \times \left(\hat{\m}_h^{n+2}\times(\epsilon \Delta_h\m_h^{n+2,*} +\hat{\f}_h^{n+2} ) \right), \\
	   & \qquad\qquad\qquad\qquad\quad \m_h^{n+2} = \frac{{\m}_h^{n+2,*}}{ |{\m}_h^{n+2,*}| },
	    \end{aligned}
	\right.
\end{equation*} 
where ${\m}_h^{n+2,*}$ is the intermediate magnetization.
For \cref{eqq-7}, using the same idea, we have
\begin{equation*}\label{ccc73}
\textrm{\bf Scheme B }	\left\{
\begin{aligned}
(1-\alpha \hat{\m}_h^{n+2}\times)& \frac{\frac32 {\m}_h^{n+2,*} - 2 {\m}_h^{n+1} + \frac12 {\m}_h^n}{k}\\
&  =  - (1+\alpha^2)\hat{\m}_h^{n+2} \times\big(\epsilon \Delta_h\m_h^{n+2,*} +\hat{\f}_h^{n+2} \big), \\
 \m_h^{n+2} & = \frac{{\m}_h^{n+2,*}}{ |{\m}_h^{n+2,*}| }.
\end{aligned}
\right.
\end{equation*}

\begin{remark}
	Using the same idea, we can construct high-order semi-implicit projection schemes for \cref{eq-5} and
	\cref{eqq-7} using high-order BDFs and high-order one-sided interpolations. However, if the order
	is greater than $2$, then such a scheme cannot be A-stable. Numerical tests for LLG equation also
	indicates the conditional stability of higher order semi-implicit projection schemes. First-order
	semi-implicit projection schemes are A-stable, but they do not have obvious advantages over the
	Gauss-Seidel projection method \cite{wang2001gauss}.
\end{remark}

\begin{remark}
To kick start \textbf{Scheme A} and \textbf{Scheme B} in implementation, we use the first-order semi-implicit projection scheme with 
the first-order BDF and the first-order one-sided interpolation for the first temporal step, and thus the whole method is still second-order accurate.
\end{remark}

\subsection{Unconditionally unique solvability}\label{section:solvability}
For simplicity of illustration, we assume that the spatial mesh size $h_x = h_y = h_z =h$. An extension to the general case is straightforward.

We firstly introduce the discrete inner product.
\begin{definition}[Discrete inner product]
	For grid functions $\f_h$ and $\g_h$ over the uniform numerical grid, we define
	\begin{align*}
	\langle {\bm f}_h,{\bm g}_h \rangle = h^d\sum_{\mathcal{I}\in \Lambda_d} \f_{\mathcal{I}}\cdot\g_{\mathcal{I}},
	\end{align*}
	where $\Lambda_d$ is the index set and $\mathcal{I}$ is the index which closely depends on the dimension $d$.
\end{definition}

\begin{definition}
	For the grid function $\f_h$, we define the average of summation as
	\begin{align*}
	\overline{\f_h}=h^d\sum_{\mathcal{I}\in \Lambda_d}\f_{\mathcal{I}}.
	\end{align*}
\end{definition}

\begin{definition}
	For the grid function $\f_h$ with $\overline{\f_h}=0$, we define the discrete $H_h^{-1}$-norm as
	\begin{equation*}
	\|\f_h\|_{-1}^2=\langle(-\Delta_h)^{-1}\f_h,\f_h\rangle.
	\end{equation*}
\end{definition}

For ease of notation, we drop the temporal indices and rewrite \textrm{\bf Scheme A } and \textrm{\bf Scheme B }
in a more compact form
\begin{align}
\big(\frac{3}{2}I+k\epsilon\hat{\m}_h\times\Delta_h+\alpha k\epsilon\hat{\m}_h\times(\hat{\m}_h\times\Delta_h)\big){\m}_h=\p_h,\label{eqqq-19} \\
\big(\frac{3}{2}I-\frac{3}{2}\alpha\hat{\m}_h\times+k\epsilon(1+\alpha^2)\hat{\m}_h\times\Delta_h\big)\m_h=\tilde{\p}_h, \label{eqqq-20}
\end{align}
where $\p_h$, $\tilde{\p}_h$, and $\hat{\m}_h$ are given.

\begin{thm}[Solvability for \textrm{\bf Scheme A}]\label{thm2}
	Given $\p_h$ and $\hat{\m}_h$, the numerical scheme \cref{eqqq-19} admits a unique solution.
\end{thm}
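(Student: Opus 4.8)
The plan is to use that \cref{eqqq-19} is a square, real-coefficient linear system on the finite-dimensional space of grid functions, so that existence and uniqueness for every given $\p_h$ is equivalent to injectivity of the coefficient operator. It therefore suffices to set $\p_h=\0$ and show that the only solution of the homogeneous problem
\[
\frac{3}{2}\m_h + k\epsilon\,\hat{\m}_h\times\Delta_h\m_h + \alpha k\epsilon\,\hat{\m}_h\times(\hat{\m}_h\times\Delta_h\m_h) = \0
\]
is $\m_h=\0$.

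The decisive choice is to pair this identity with $\Delta_h\m_h$ in the discrete inner product $\langle\cdot,\cdot\rangle$ rather than with $\m_h$, since this makes both cross-product terms tractable. First, the gyromagnetic term vanishes identically because $(\hat{\m}_h\times\Delta_h\m_h)\cdot\Delta_h\m_h=0$ pointwise. Second, discrete summation by parts under the homogeneous Neumann condition \cref{boundary} gives $\langle\m_h,\Delta_h\m_h\rangle=-\|\nabla_h\m_h\|^2\le 0$. Third, the vector identity $\a\times(\a\times\g)=(\a\cdot\g)\a-|\a|^2\g$, applied pointwise with $\a=\hat{\m}_h$ and $\g=\Delta_h\m_h$, reduces the damping term to $(\hat{\m}_h\cdot\Delta_h\m_h)^2-|\hat{\m}_h|^2|\Delta_h\m_h|^2\le 0$ by the Cauchy--Schwarz inequality.

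Because $k,\epsilon,\alpha>0$, the paired identity is then a sum of three nonpositive quantities equal to zero, so each must vanish. In particular $\|\nabla_h\m_h\|^2=0$, which forces $\m_h$ to be constant on the (connected) grid and hence $\Delta_h\m_h=\0$. Feeding this back into the homogeneous equation collapses it to $\tfrac{3}{2}\m_h=\0$, that is $\m_h=\0$, completing the uniqueness argument and thus the solvability claim.

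I expect no conceptual obstacle beyond the choice of test function; the remaining work is bookkeeping. One must confirm the sign produced by discrete summation by parts under the Neumann boundary condition, and must apply Cauchy--Schwarz pointwise \emph{before} summing so that definiteness is not lost where $|\hat{\m}_h|$ varies in space---recall that $\hat{\m}_h=2\m_h^{n+1}-\m_h^n$ need not be a unit vector. Pairing instead against $\m_h$ would leave the gyromagnetic contribution $\langle\Delta_h\m_h,\m_h\times\hat{\m}_h\rangle$, which is not sign-definite, and the estimate would fail to close.
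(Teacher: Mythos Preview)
Your proof is correct and, in fact, more elementary than the paper's. Both arguments rest on the same two pointwise identities---that $(\hat{\m}_h\times\g)\cdot\g=0$ and $(\hat{\m}_h\times(\hat{\m}_h\times\g))\cdot\g\le 0$ with $\g=\Delta_h\m_h$---and both effectively test the equation against $\Delta_h\m_h$. The difference lies in the surrounding framework: the paper performs the change of variables $\q_h=-\Delta_h\m_h$, recasts the scheme as a nonlinear operator equation $G(\q_h)=\0$, proves strict monotonicity of $G$ in the discrete $H_h^{-1}$ inner product, and then invokes the Browder--Minty lemma (\lemref{lem:Browder}) to conclude existence and uniqueness. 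You instead exploit that the system is linear and finite-dimensional, so that injectivity of the coefficient operator already suffices; your pairing with $\Delta_h\m_h$ is exactly the monotonicity computation, restricted to the homogeneous problem. What your route buys is economy: no auxiliary variable, no $H_h^{-1}$-norm, and no appeal to an abstract fixed-point lemma designed for nonlinear problems. What the paper's route buys is a template that would survive if the scheme were genuinely nonlinear in $\m_h^{n+2}$, which it is not here.
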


For the unique solvability analysis for \cref{eqqq-19}, we denote $\q_h = - \Delta_h \m_h$. Note that $\overline{\q_h} =0$, due to the Neumann boundary condition for ${\m}_h$. Meanwhile, we observe that ${\m}_h \ne (-\Delta_h)^{-1} \q_h$ in general, since $\overline{{\m_h}} \ne 0$.
Instead, we rewrite \cref{eqqq-19} into
\begin{align*}
\m_h = \;& \frac23 \Bigl( \p_h + k \epsilon \hat{\m}_h \times \q_h
+ \alpha k \epsilon \hat{\m}_h \times ( \hat{\m}_h \times \q_h)  \Bigr)
\end{align*}
and take the average on both sides. Therefore, ${\m}_h$ can be represented as follows:
\begin{equation*}
{\m}_h = (-\Delta_h)^{-1} \q_h + C_{\q_h}^* \quad \mbox{with} \, \, \,
C_{\q_h}^* = \frac23 \Bigl( \overline{\p_h} + k \epsilon \overline{ \hat{\m}_h \times \q_h}
+ \alpha k \epsilon\overline{\hat{\m}_h \times ( \hat{\m}_h \times \q_h) } \Bigr)
\label{m-representation-1}
\end{equation*}
and $\hat{\m}_h$ is given by \cref{m_hat}. In turn, \cref{eqqq-19} is then rewritten as
\begin{equation}
G (\q_h) := \frac32 ( (-\Delta_h)^{-1} \q_h + C_{\q_h}^* ) - \p_h
- k\epsilon\hat{\m}_h \times \q_h
- \alpha k\epsilon \hat{\m}_h \times ( \hat{\m}_h \times \q_h) = \0.
\label{scheme-alt-2}
\end{equation}

To proceed, we need the following lemma.
\begin{lem}[Browder-Minty lemma \cite{browder1963nonlinear,minty1963monotonicity}] \label{lem:Browder}
	Let $\mathcal{X}$ be a real, reflexive Banach space and let $T: \mathcal{X} \to \mathcal{X}'$ (the dual space of $\mathcal{X}$) be bounded, continuous,
	coercive (i.e., $\frac{ (T (u) , u ) }{ \| u \|_{\mathcal{X}} } \to +\infty$, as $\| u \|_{\mathcal{X}} \to +\infty$) and monotone.
	Then for any $g \in {\mathcal{X}}'$ there exists a solution $u \in {\mathcal{X}}$ of the equation $T (u) =g$.
	
	Furthermore, if the operator $T$ is strictly monotone, then the solution $u$ is unique.
\end{lem}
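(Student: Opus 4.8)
The plan is to prove uniqueness first and then existence by a Galerkin (finite-dimensional approximation) argument combined with Minty's monotonicity trick. Uniqueness under strict monotonicity is immediate: if $T(u_1) = T(u_2) = g$, then $(T(u_1) - T(u_2), u_1 - u_2) = 0$, which by strict monotonicity forces $u_1 = u_2$.

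For existence, I would first reduce to a separable setting, in which the union of a countable increasing family of finite-dimensional subspaces can be taken dense in $\mathcal{X}$; in the application of interest $\mathcal{X}$ is finite-dimensional, so this is automatic. Fix an increasing sequence $\mathcal{X}_1 \subset \mathcal{X}_2 \subset \cdots$ of finite-dimensional subspaces with $\overline{\bigcup_n \mathcal{X}_n} = \mathcal{X}$, and on each $\mathcal{X}_n$ solve the Galerkin problem: find $u_n \in \mathcal{X}_n$ with $(T(u_n), v) = (g, v)$ for all $v \in \mathcal{X}_n$. Identifying $\mathcal{X}_n$ with $\mathbb{R}^{d_n}$, the map $v \mapsto (T(v) - g)|_{\mathcal{X}_n}$ is continuous, and coercivity gives $(T(v) - g, v) = (T(v), v) - (g, v) > 0$ once $\|v\|_{\mathcal{X}}$ is large enough, since the coercive term dominates the linear term $|(g,v)| \le \|g\|_{\mathcal{X}'}\|v\|_{\mathcal{X}}$. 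The standard corollary of Brouwer's fixed-point theorem, that a continuous field pointing outward on a large sphere must vanish inside it, then yields a Galerkin solution $u_n$.

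Testing the Galerkin equation with $u_n$ itself gives $(T(u_n), u_n) = (g, u_n) \le \|g\|_{\mathcal{X}'}\|u_n\|_{\mathcal{X}}$; coercivity then forces a uniform bound $\|u_n\|_{\mathcal{X}} \le M$, for otherwise $(T(u_n), u_n)/\|u_n\|_{\mathcal{X}} \to +\infty$ would contradict this inequality. Using reflexivity of $\mathcal{X}$, and hence of $\mathcal{X}'$, together with boundedness of $T$, I extract a subsequence with $u_n \rightharpoonup u$ in $\mathcal{X}$ and $T(u_n) \rightharpoonup \chi$ in $\mathcal{X}'$. For fixed $m$ and $v \in \mathcal{X}_m$, the Galerkin identity $(T(u_n), v) = (g, v)$ holds for all $n \ge m$; passing to the limit gives $(\chi, v) = (g, v)$, and density of $\bigcup_m \mathcal{X}_m$ yields $\chi = g$ in $\mathcal{X}'$. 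Moreover $(T(u_n), u_n) = (g, u_n) \to (g, u) = (\chi, u)$.

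The main obstacle is identifying the weak limit, namely showing $\chi = T(u)$, since $T$ is nonlinear and only weak convergence is available; this is exactly where Minty's trick enters. By monotonicity, $(T(u_n) - T(w), u_n - w) \ge 0$ for every $w \in \mathcal{X}$. Expanding and passing to the limit, using $T(u_n) \rightharpoonup \chi$, $u_n \rightharpoonup u$, and the energy convergence $(T(u_n), u_n) \to (\chi, u)$ established above, gives $(\chi - T(w), u - w) \ge 0$ for all $w$. Choosing $w = u - tv$ with $t > 0$ and arbitrary $v$, dividing by $t$, and letting $t \to 0^+$ with the continuity of $s \mapsto T(u - sv)$ yields $(\chi - T(u), v) \ge 0$; replacing $v$ by $-v$ gives equality, so $\chi = T(u)$. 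Hence $T(u) = g$, completing the existence proof. I expect the delicate points to be the justification of the energy convergence that powers Minty's trick and, in the fully general reflexive case, the construction of the dense filtration $\{\mathcal{X}_n\}$, both of which are routine in the finite-dimensional setting relevant to \cref{scheme-alt-2}.
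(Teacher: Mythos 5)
The paper offers no proof of this lemma at all: it is quoted as a classical result of Browder and Minty, with the citations standing in for the argument, and it is then used as a black box in the proof of \thmref{thm2}. Your proposal supplies the standard textbook proof (Galerkin approximation plus Minty's monotonicity trick), and it is correct: uniqueness from strict monotonicity is immediate; the finite-dimensional Galerkin problems are solvable by the Brouwer-type zero lemma, since coercivity makes $(T(v)-g,v)>0$ on a large sphere; the a priori bound $\|u_n\|_{\mathcal{X}}\le M$ follows from testing with $u_n$ and coercivity; reflexivity yields weak limits $u_n\rightharpoonup u$ and $T(u_n)\rightharpoonup\chi=g$; and the identification $\chi=T(u)$ via $(\chi-T(w),u-w)\ge 0$ with $w=u-tv$, $t\to 0^+$ uses only the hemicontinuity implied by the stated continuity of $T$. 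The energy convergence you single out as delicate is in fact automatic here: since $u_n\in\mathcal{X}_n$ is an admissible test function, $(T(u_n),u_n)=(g,u_n)$ holds exactly, and $(g,u_n)\to(g,u)$ by weak convergence, so Minty's trick goes through without further work. The one genuine caveat is the one you flag yourself: a countable increasing filtration of finite-dimensional subspaces with dense union exists precisely when $\mathcal{X}$ is separable, so your argument as written proves the lemma only in the separable case; the statement for an arbitrary reflexive Banach space requires an additional reduction (a net over all finite-dimensional subspaces directed by inclusion, or Browder's original argument). For the use the paper actually makes of the lemma, this is harmless and indeed your machinery is overkill: in \cref{scheme-alt-2} the space of mean-zero grid functions is finite-dimensional, so one may take $\mathcal{X}_1=\mathcal{X}$, the Brouwer corollary alone produces a solution, no weak compactness or Minty trick is needed, and monotonicity serves only to give uniqueness.
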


\begin{proof}[Proof of \thmref{thm2}]
	For any $\q_{1,h}$, $\q_{2,h}$ with $\overline{\q_{1,h}} = \overline{\q_{2,h}}=0$,
	we denote $\tilde{\q}_h = \q_{1,h} - \q_{2,h}$ and derive the following monotonicity estimate:
	\begin{align*}
	&\langle G (\q_{1,h}) - G (\q_{2,h}) , \q_{1,h} - \q_{2,h} \rangle \\
	&=\frac{3}{2} \Bigl(
	\langle (-\Delta_h)^{-1} \tilde{\q}_h , \tilde{\q}_h \rangle
	+ \langle C_{\q_{1,h}}^* - C_{\q_{1,h}}^* , \tilde{\q}_h \rangle \Bigr)\nonumber\\
	& - k\epsilon\langle \hat{\m}_h \times \tilde{\q}_h , \tilde{\q}_h \rangle
	- \alpha k\epsilon \langle\hat{\m}_h \times ( \hat{\m}_h \times \tilde{\q}_h ) ,
	\tilde{\q}_h \rangle\nonumber\\
	&\ge \frac{3}{2} \Bigl(
	\langle (-\Delta_h)^{-1} \tilde{\q}_h , \tilde{\q}_h \rangle
	+ \langle C_{\q_{1,h}}^* - C_{\q_{2,h}}^* , \tilde{\q}_h \rangle \Bigr) \nonumber\\
	&= \frac{3}{2 }
	\langle (-\Delta_h)^{-1} \tilde{\q}_h , \tilde{\q}_h \rangle
	= \frac{3}{2 } \| \tilde{\q}_h \|_{-1}^2 \ge 0 . \nonumber
	\end{align*}
	Note that the following equality and inequality have been applied in the second step:
	\begin{align*}
	\langle \hat{\m}_h \times \tilde{\q}_h , \tilde{\q}_h \rangle  & = 0 , \quad
	\langle \hat{\m}_h \times ( \hat{\m}_h \times \tilde{\q}_h ), \tilde{\q}_h \rangle \le 0 .
	\end{align*}
	The third step is based on the fact that both $C_{\q_{1,h}}^*$ and $C_{\q_{2,h}}^*$ are constants, and $\overline{\q_{1,h}} = \overline{\q_{2,h}}=0$, so that $\langle C_{\q_{1,h}}^* - C_{\q_{2,h}}^* , \tilde{\q}_h \rangle = 0$.
	Moreover, for any $\q_{1,h}$, $\q_{2,h}$ with $\overline{\q_{1,h}} = \overline{\q_{2,h}}=0$, we get
	\begin{eqnarray*}
		\langle G (\q_{1,h}) - G (\q_{2,h}) , \q_{1,h} - \q_{2,h} \rangle
		\ge \frac{3}{2 k} \| \tilde{\q}_h \|_{-1}^2 > 0 ,  \quad
		\mbox{if $\q_{1,h} \ne \q_{2,h}$} , \label{solvability-3}
	\end{eqnarray*}
	and the equality only holds when $\q_{1,h} = \q_{2,h}$.
	
	Therefore, an application of \lemref{lem:Browder} implies a unique solution of \textrm{\bf Scheme A}.
	
\end{proof}
	
\begin{thm}[Solvability for \textrm{\bf Scheme B}]\label{thm3}
	Given $\tilde{\p}_h$ and $\hat{\m}_h$, the numerical scheme \cref{eqqq-20} admits a unique solution.
\end{thm}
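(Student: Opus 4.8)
The plan is to exploit that \cref{eqqq-20} is \emph{linear} in the unknown ${\m}_h$ (the quantities $\tilde{\p}_h$ and $\hat{\m}_h$ being given data) and that the grid functions form a finite-dimensional space, so that unique solvability is equivalent to injectivity of the associated linear map. Thus I would reduce the claim to showing that the homogeneous problem
\[
\tfrac32{\m}_h - \tfrac32\alpha\,\hat{\m}_h\times{\m}_h + \lambda\,\hat{\m}_h\times\Delta_h{\m}_h = \0,
\qquad \lambda := k\epsilon(1+\alpha^2)>0,
\]
forces ${\m}_h=\0$. I would \emph{not} transplant the Browder--Minty reformulation of \thmref{thm2} verbatim: the extra zeroth-order term $-\tfrac32\alpha\,\hat{\m}_h\times{\m}_h$ spoils the clean monotone structure in the single variable $\q_h:=-\Delta_h{\m}_h$, because once one substitutes ${\m}_h=(-\Delta_h)^{-1}\q_h+\text{const}$ the cross product $\langle\hat{\m}_h\times(-\Delta_h)^{-1}\q_h,\q_h\rangle$ neither vanishes nor has a definite sign.

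Instead, setting $\q_h:=-\Delta_h{\m}_h$, I would extract two complementary energy identities. Pairing the homogeneous equation with $\q_h$ and using $\langle\hat{\m}_h\times\q_h,\q_h\rangle=0$ (pointwise perpendicularity of the cross product) together with discrete summation by parts gives
\[
\langle -\Delta_h{\m}_h,{\m}_h\rangle=\alpha\,A,
\qquad A:=\langle\hat{\m}_h\times{\m}_h,-\Delta_h{\m}_h\rangle .
\]
Pairing instead with ${\m}_h$, and using $\langle\hat{\m}_h\times{\m}_h,{\m}_h\rangle=0$ and the triple-product identity $\langle\hat{\m}_h\times\q_h,{\m}_h\rangle=-\langle\hat{\m}_h\times{\m}_h,\q_h\rangle=-A$, gives
\[
\tfrac32\|{\m}_h\|^2=-\lambda\,A .
\]

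Combining these two identities is the decisive step. The first shows $A=\tfrac1\alpha\langle-\Delta_h{\m}_h,{\m}_h\rangle\ge0$, since $-\Delta_h$ is positive semidefinite under the discrete Neumann condition, while the second shows $A=-\tfrac{3}{2\lambda}\|{\m}_h\|^2\le0$; with $\alpha,\lambda>0$ these are compatible only if $A=0$, whereupon $\|{\m}_h\|^2=0$ and ${\m}_h=\0$. Injectivity of a linear map on a finite-dimensional space then yields surjectivity, so \cref{eqqq-20} has a unique solution.

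The hard part is precisely the coupling term $\hat{\m}_h\times{\m}_h$: it renders the natural bilinear form indefinite under any single test function, leaving a stray term $\lambda A$ when one tests with ${\m}_h$ and $\alpha A$ when one tests with $\q_h$. The resolution is to use the two tests together, or equivalently to pair with the weighted test function ${\m}_h-\tfrac{2\lambda}{3\alpha}\Delta_h{\m}_h$, for which the $A$-contributions cancel and the resulting form $\tfrac32\|{\m}_h\|^2+\tfrac{\lambda}{\alpha}\langle-\Delta_h{\m}_h,{\m}_h\rangle$ is coercive; phrased this way the conclusion can be drawn from \lemref{lem:Browder} exactly as in \thmref{thm2}. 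The summation-by-parts and sign properties of the discrete Neumann Laplacian invoked above are the same ones already used there, so I would treat them as routine.
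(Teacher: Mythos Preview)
Your argument is correct, but it proceeds along a genuinely different line from the paper's proof. The paper takes a spectral route: writing \cref{eqqq-20} as $(\tfrac32 I-A){\m}_h=\tilde{\p}_h$, it factors $A=k\epsilon(1+\alpha^2)\,MS$ with $M$ the antisymmetric block-diagonal matrix representing $\hat{\m}_h\times$ and $S=-\Delta_h+\tfrac{3\alpha}{2k\epsilon(1+\alpha^2)}I_h$ symmetric positive definite; a Cholesky-type factorisation $S=CC^T$ then shows $MS$ is similar to the antisymmetric matrix $CMC^T$, so the spectrum of $A$ lies on the imaginary axis and $\tfrac32 I-A$ is invertible. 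Your proof instead establishes injectivity by energy estimates: testing the homogeneous equation against ${\m}_h$ and against $-\Delta_h{\m}_h$ traps the single cross term $A=\langle\hat{\m}_h\times{\m}_h,-\Delta_h{\m}_h\rangle$ between opposite signs. The paper's approach yields more structural information (the exact location of the spectrum), while yours is more elementary---it avoids matrix factorisations and the spectral lemma for antisymmetric matrices, relying only on the positive semidefiniteness of $-\Delta_h$ and pointwise cross-product identities---and the weighted-test-function reformulation you sketch is closer in spirit to stability estimates. One small comment: once you have reduced to injectivity of a finite-dimensional linear map, invoking \lemref{lem:Browder} at the end is unnecessary; the direct two-identity argument you gave already finishes the job.
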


	\begin{proof}[Proof of \thmref{thm3}]
		We first rewrite \cref{eqqq-20} in a  compact form $\left(\frac{3}{2}I-A \right)\m_h=\tilde{\p}_h$, where
		\begin{align*}
		A = &\; \frac{3}{2}\alpha\hat{\m}_h\times I_h+k\epsilon(1+\alpha^2)\hat{\m}_h\times(-\Delta_h)\\
		= &\; k\epsilon(1+\alpha^2)\hat{\m}_h\times\left( -\Delta_h+\frac{3\alpha}{2k\epsilon(1+\alpha^2)}I_h\right)\\
		=:& \; k\epsilon(1+\alpha^2) MS.	
		\end{align*}
		Here $I_h$ is the identity matrix, $M$ is the antisymmetric matrix corresponding to the discrete operator $\m_h\times$,
		and $S$ is the symmetric positive definite matrix corresponding to $-\Delta_h+\frac{3\alpha}{2k\epsilon(1+\alpha^2)}I_h$ which admits a decomposition $S=CC^T$
		with $C$ being nonsingular.
		
		Thus, we have
		\begin{align*}
		|\lambda I-MS|=|\lambda I-MC^TC|=|\lambda I-CMC^T|,
		\end{align*}
		Due to the antisymmetry of matrix $M$, we have
		\begin{align*}
		(CMC^T)^T = \; & -CMC^T. 
		\end{align*}
		It follows from the spectral lemma for antisymmetric matrices \citep{prasolov1994problems} that the eigenvalues of $CMC^T$ are either $0$ or purely imaginary, 
		thus the eigenvalues of $MS$ are either $0$ or purely imaginary as well. This unique solvability comes as a consequence of the fact that
		all eigenvalues of $\frac{3}{2}I-A$ have $\frac{3}{2}$ as real parts and $\det(\frac{3}{2}I-A)\neq 0$.
	\end{proof}

\begin{remark}
Note that the unique solvability of \textbf{Scheme A} and \textbf{Scheme B} \textit{does not} impose any condition on $k$ and $h$.
	This is in contrast with earlier results for the fully implicit schemes where $k=\mathcal{O}(h^2)$ is needed for the unique solution of the nonlinear
	system of equations at each time step; see \citep{fuwa2012finite} for example.
\end{remark}

\begin{remark}
	%The convergence of \textbf{Scheme A} and \textbf{Scheme B} is hard to prove due to the lack of numerical stability of Lax-Richtmyer type.
   %	{\blue I removed the previous line since Lax-Ritchmyer-type condition only makes sense for linear equations, so it would not apply here. Here one can use Strang's theorem, which is probably what was used in [22]}
	In \citep{jingrun2019analysis}, we prove the second-order convergence of a modified scheme of \textbf{Scheme A} by introducing two sets of approximated solutions and separating errors caused by the evolution step and by the projection step.
	A similar proof of \thmref{thm2} has also been given in \citep{jingrun2019analysis}.
\end{remark}

\subsection{Computation of the stray field}\label{section:stray}
 The stray field $\h_s=-\nabla U$ with $U$ the scalar function in $\mathbb{R}^3$ which satisfies
	\begin{align}\label{poisson_eqn}
	\Delta U = \; & \left\{ 
	\begin{aligned}
	&\quad  \nabla \cdot \m \quad \textrm{ in } \Omega \\
	&\qquad  0 \qquad\textrm{ outside } \Omega, 
	\end{aligned}
	\right.
	\end{align}
	together with jump conditions
	\begin{align}\label{poisson_boundary}
	\left[U\right]_{\partial \Omega} = \;& 0 \nonumber\\
	\left[\frac{\partial U}{\partial {\bf \nu}}\right]_{\partial \Omega} = \;& -\m \cdot {\bf \nu}.
	\end{align}
Here $[U]_{\partial \Omega}$ denotes the jump of $U$ at the material boundary as
	\begin{equation}
	[U]_{\partial \Omega}({\bm x})=\lim_{\substack{{\bm y}\to {\bm x} \\ {\bm y}\in \mathbb{R}^3/\Omega}} U({\bm y})-\lim_{\substack{{\bm y}\to {\bm x} \\ {\bm y}\in \Omega}} U({\bm y}),
	\end{equation}
	and $\left[\frac{\partial U}{\partial {\bf \nu}}\right]_{\partial \Omega}$ is defined similarly.
	The solution to \cref{poisson_eqn} - \cref{poisson_boundary} is
	\begin{align*}
	U(\x) = \int_{\Omega} \nabla N({\bm x}-{\bm y})\cdot {\bm m}({\bm y})\,d{\bm y},
	\end{align*}
	and thus the stray field
	\begin{equation}\label{eqq-5}
	{\h}_{\text{s}}=-\nabla \int_{\Omega} \nabla N({\bm x}-{\bm y})\cdot {\bm m}({\bm y})\,d{\bm y},
	\end{equation}
	where $N({\bm x})=-\frac{1}{4\pi |{\bm x}|}$ is the Newtonian potential.
	
	It follows from \cref{eqq-5} combined with the divergence theorem that
	\begin{align}\label{eqn:div}
	{\h}_s({\bm r}) = \frac{1}{4\pi}\nabla\left\{\int_{\Omega}\frac{\nabla \cdot \m({\bm r}^\prime)}{|{\bm r}-{\bm r}^{\prime}|}d{\bm r}^\prime-\int_{\partial\Omega}\frac{\m({\bm r}^\prime)\cdot {\bm n}({\bm r}^{\prime})}{|{\bm r}-{\bm r}^{\prime}|}dS({\bm r}^\prime)\right\}.
	\end{align}
	The evaluation of the stray field can be carried out by performing an integration over the entire material for every point ${\bm r}$. 
	In terms of the computational complexity, a direct evaluation of \eqref{eqq-5} requires $\mathcal{O}(N^2)$ with $N$ the
	degree of freedoms. Moreover, we need to evaluate \eqref{eqq-5} at each time step. Therefore, 
	the direct evaluation is computationally expensive and thus a fast solver is highly desirable. 
	The complexity for solving stray field using FFT is $\mathcal{O}(N\log N)$ \cite{wang2001gauss, Garcia2001Improved}.

\section{Accuracy and efficiency test}\label{section:numerical1}

We use examples in both 1D and 3D to show the second-order accuracy of \textbf{Scheme A} and \textbf{Scheme B}.

In order to have the exact magnetization profile, we consider the simplified LLG equation with only the exchange term and
set $\epsilon=1$ with the forcing term
\begin{align}\label{eq-24}
\m_t=-\m\times\Delta\m-\alpha\m\times(\m\times\Delta\m) + \g,
\end{align}
where $\g=\m_{et}+\m_e\times\Delta \m_e+\alpha \m_e\times(\m_e\times\Delta \m_e)$ with $\m_e$ the exact solution. The ferromagetic body $\Omega = [0,1]$ in 1D and $\Omega=[0,1]^3$ in 3D.
The final time $T=1$. Since the exchange term is the stiffest term in the original LLG equation, it is adequate to
use \eqref{eq-24} to test accuracy and efficiency of the proposed methods.

The exact solution in 1D is
\begin{equation}\label{eqn:exact1d}
\m_e=\left(\cos(x^2(1-x)^2)\sin t, \sin(x^2(1-x)^2)\sin t, \cos t\right)^T,
\end{equation}
which satisfies the homogeneous Neumann boundary condition.

The exact solution in 3D is
\begin{align}\label{eqn:exact3d}
	\m_e=\left(\cos(XYZ)\sin t, \sin(XYZ)\sin t, \cos t\right)^T,
\end{align}
where $X=x^2(1-x)^2$, $Y=y^2(1-y)^2$, and $Z=z^2(1-z)^2$.

Since both schemes are semi-implicit, we compare their efficiency with another
two semi-implicit methods: Gauss-Seidel projection method~\cite{wang2001gauss} and
the second-order implicit-explicit method~\citep{Boscarino2016High}. For completeness, we first
state these two methods.

\subsection{The Gauss-Seidel projection method}

It follows from \cref{eq-4,eq-5} that the Gauss-Seidel projection method (GSPM) ~\cite{wang2001gauss} is
given as the following three steps:
\begin{enumerate}[Step 1. ]
	\item Implicit Gauss-Seidel:
	\begin{align}
	g_i^n &= (I-\epsilon \Delta t\Delta_h)^{-1}(m_i^n+\Delta tf_i^n),\nonumber \\
	g_i^{*} &= (I-\epsilon \Delta t\Delta_h)^{-1}(m_i^{*}+\Delta tf_i^{n}),\ \ i=1,2,3  \label{eq-17}
	\end{align}
	{\begin{equation}
		\begin{pmatrix}
		m_1^{*}\\
		m_2^{*}\\
		m_3^{*}
		\end{pmatrix}
		=
		\begin{pmatrix}
		m_1^n+(g_2^nm_3^n-g_3^nm_2^n)\\
		m_2^n+(g_3^nm_1^{*}-g_1^{*}m_3^n)\\
		m_3^n+(g_1^{*}m_2^{*}-g_2^{*}m_1^{*})
		\end{pmatrix}.
		\end{equation}}
	\item Heat flow without constraints :
	\begin{equation}\label{eq-19}
	{\bm f}^{*}=-Q(m_2^{*}{\bm e}_2+m_3^{*}{\bm e}_3)+{\bm h}_s^{n}+{\bm h}_e
	\end{equation}
	
	{\begin{equation}
		\begin{pmatrix}
		m_1^{**}\\
		m_2^{**}\\
		m_3^{**}
		\end{pmatrix}
		=
		\begin{pmatrix}
		m_1^{*}+\alpha \Delta t (\epsilon \Delta_hm_1^{**}+f_1^{*})\\
		m_2^{*}+\alpha \Delta t (\epsilon \Delta_hm_2^{**}+f_2^{*})\\
		m_3^{*}+\alpha \Delta t (\epsilon \Delta_hm_3^{**}+f_3^{*})
		\end{pmatrix}.
		\end{equation}}
	\item Projection onto $S^2$:
	{\begin{equation}
		\begin{pmatrix}
		m_1^{n+1}\\
		m_2^{n+1}\\
		m_3^{n+1}
		\end{pmatrix}
		=
		\frac{1}{|m^{**}|}\begin{pmatrix}
		m_1^{**}\\
		m_2^{**}\\
		m_3^{**}
		\end{pmatrix}.
		\end{equation}}
\end{enumerate}
where $\m^*$ denotes the intermediate values of $\m$. As in \cite{Garcia2001Improved}, the stray field $\h_s$ is computed using the intermediate values $\m^{*}$ in \cref{eq-17} and \cref{eq-19}.

\subsection{The second-order implicit-explicit method}

For simplicity, we only formulate the second-order implicit-explicit (IMEX2) scheme \citep{Boscarino2016High}
without Gilbert damping.
Define
\begin{equation*}
{\bm H}(t,{\bm u},{\bm v})=-{\bm u}\times \Delta {\bm v} + {\bm f}
\end{equation*}
and
\begin{align*}
{\bm u^1} &= {\bm m}^n, \\
{\bm v}^1 &= {\bm m}^n.
\end{align*}
The second-order IMEX scheme (IMEX2) reads as
\begin{align}
{\bm \ell}^1 &= {\bm H}(t^n+\gamma k,{\bm u}^1,{\bm v}^1+\gamma k{\bm \ell}^1), \nonumber \\
{\bm k}^1 &= {\bm H}(t^n,{\bm u}^1,{\bm v}^1+\gamma k{\bm \ell}^1),\nonumber \\
{\bm u}^2 &= {\bm u}^1 + k{\bm k^1}, \nonumber \\
{\bm v}^2 &= {\bm u}^1 + (1-2\gamma)k {\bm k}^1, \nonumber \\
{\bm \ell}^2 &= {\bm H}(t^n + (1-\gamma)k,{\bm u}^2,{\bm v}^2+\gamma k {\bm \ell}^2),  \nonumber \\
{\bm k}^2 &= {\bm H}(t^n + k,{\bm u}^2,{\bm v}^2+\gamma k{\bm \ell}^2), \nonumber \\
\tilde{{\bm m}}^{n+1} &= {\bm m}^n +\frac{1}{2} k({\bm k}^1 + {\bm k}^2),\nonumber \\
{\bm m}^{n+1} &= \frac{\tilde{{\bm m}}^{n+1}}{|\tilde{{\bm m}}^{n+1}|},\label{IMEX2}
\end{align}	
where $\gamma=1-1/\sqrt{2}$.  Note that IMEX2 solves two linear systems of equations at each step.

\begin{remark}
The computational cost for GSPM, IMEX2, and BDF2 at each temporal step is as follows.
Seven symmetric linear systems of equations with constant coefficients and dimension $M$ need to be solved for GSPM,
and two nonsymmetric linear systems of equations with variable coefficients and dimension $3M$ need to be solved for IMEX2.
BDF2 only needs to solve one nonsymmetric linear systems of equations with variable coefficients and dimension $3M$ for 
both \textbf{Scheme A} and \textbf{Scheme B}. Here $M$ is the number of unknowns in each spatial dimension.
\end{remark}

\subsection{Accuracy test}

Since \textrm{\bf Scheme A} and \textrm{\bf Scheme B} are comparable numerically, we only show results of
\textrm{\bf Scheme A}, termed as BDF2. For comparison, we also list results of the other two semi-implicit 
methods: GSPM and IMEX2.

In 1D, for \eqref{eqn:exact1d}, we fix $h=5D-4$ and record the temporal error in terms of the temporal stepsize $k$ in \Cref{tab-2}
and \Cref{all_norms_time_1D} to get the temporal accuracy. Both BDF2 and IMEX2 are second-order accurate,
while GSPM is first-order accurate. To get the spatial accuracy, 
we fix $k=1D-6$ and record the spatial error in terms of $h$ in \Cref{tab-5} and \Cref{all_norms_space_1D}.
BDF2, IMEX2, and GSPM are all second-order accurate.

\begin{table}[htbp]
	\centering
	\caption{Temporal accuracy in 1D for BDF2, GSPM, and IMEX2 when $h=5D-4$ and $\alpha=1D-3$.}\label{tab-2}
	\begin{tabular}{|c|p{2.5cm}<{\centering}|p{2.5cm}<{\centering}|p{2.5cm}<{\centering}|}
		\hline
		{}  & BDF2 & GSPM &IMEX2 \\
		$k$ & $\|m_h-m_e\|_{\infty}$ & $\|m_h-m_e\|_{\infty}$ & $\|m_h-m_e\|_{\infty}$\\
		\hline
		2.0D-2 &1.0753D-4 & 3.2024D-2 & 4.1119D-5 \\
		1.0D-2 & 2.7384D-5 &1.6289D-2  &1.3107D-5  \\	
		5.0D-3 & 6.8538D-6 &7.9890D-3 &3.7623D-6  \\
		2.5D-3 & 1.6513D-6 & 4.1923D-3 & 7.2361D-7 \\
		1.25D-3 &3.4152D-7 & 2.0662D-3&1.2711D-7 \\
		\hline
		order & 2.065 & 0.987 & 2.085 \\
		\hline
	\end{tabular}
\end{table}

\begin{table}[htbp]
	\centering
	\caption{Spatial accuracy in 1D for BDF2, GSPM, and IMEX2 when $k=1D-6$ and $\alpha=1D-3$.} \label{tab-5}
	\begin{tabular}{|c|p{2.5cm}<{\centering}|p{2.5cm}<{\centering}|p{2.5cm}<{\centering}|}
		\hline
		{}  & BDF2 & GSPM & IMEX2 \\
		$h$ & $\|m_h-m_e\|_{\infty}$ & $\|m_h-m_e\|_{\infty}$ & $\|m_h-m_e\|_{\infty}$ \\
		\hline
		4.0D-2 & 6.2092D-4 & 6.2094D-4 & 6.2092D-4 \\
		2.0D-2  & 1.5516D-4 & 1.5517D-4 & 1.5516D-4 \\
		1.0D-2  & 3.8789D-5 & 3.8805D-5 & 3.8789D-5 \\
		5.0D-3  & 9.6973D-6 & 9.7145D-6 & 9.6972D-6 \\
		2.5D-3  & 2.4243D-6 & 2.4439D-6 & 2.4243e-6 \\
		\hline
		order  & 2.000 &1.998 & 2.000 \\
		\hline
	\end{tabular}
\end{table}

In 3D, for \eqref{eqn:exact3d}, we fix $h_x=h_y=h_z=1/16$ and record the temporal error in terms of $k$ in \cref{tab-6} and \cref{all_norms_time_3D} to get the temporal accuracy. Both BDF2 and IMEX2 are second-order accurate, while GSPM is first-order accurate.
To get the spatial accuracy, we fix the temporal stepsize $k=1D-4$ and record the spatial error in terms of $h_x=h_y=h_z=h$ in \cref{tab-7} and \cref{all_norms_space_3D}. BDF2, IMEX2, and GSPM are all second-order accurate.
\begin{table}[htbp]
	\centering
	\caption{Temporal accuracy in 3D for BDF2, GSPM, and IMEX2 when $h_x=h_y=h_z=1/16$ and $\alpha=1D-3$.}\label{tab-6}
	\begin{tabular}{|c|p{2cm}<{\centering}|c|p{2cm}<{\centering}|c|p{2cm}<{\centering}|}
		\hline
		{}  & BDF2& {} & GSPM & {} &IMEX2 \\
		$k$ & $\|m_h-m_e\|_{\infty}$& $k$ & $\|m_h-m_e\|_{\infty}$ & $k$ & $\|m_h-m_e\|_{\infty}$\\
		\hline
		1/8 &3.1853D-3 & 1/32&3.0024D-4 & 1/4& 4.1759D-3\\
		1/16 & 9.3167D-4 &1/64 &1.5214D-4 & 1/8& 1.0659D-3 \\	
		1/32 & 2.4929D-4 & 1/128& 7.7368D-5&1/16 & 2.6611D-4\\
		1/64 & 6.1027D-5 & 1/256 & 3.8171D-5& 1/32& 6.3198D-5\\
		1/128 &1.1785D-5 & 1/512& 2.0295D-5& 1/64&1.2107D-5\\
		\hline
		order & 2.009 & - & 0.977 & - &2.094 \\
		\hline
	\end{tabular}
\end{table}

\begin{table}[htbp]
	\centering
	
	\caption{Spatial accuracy in 3D for BDF2, GSPM, and IMEX2 when $k=1D-4$ and $\alpha=1D-3$.} \label{tab-7}
	\begin{tabular}{|c|p{2.5cm}<{\centering}|p{2.5cm}<{\centering}|p{2.5cm}<{\centering}|}
		\hline
		{}  & BDF2 & GSPM & IMEX2 \\
		$h$ & $\|m_h-m_e\|_{\infty}$ & $\|m_h-m_e\|_{\infty}$ & $\|m_h-m_e\|_{\infty}$ \\
		\hline
		1/2 & 3.5860D-4 &3.5856D-4  &3.5860D-4  \\
		1/4 &8.1251D-5 & 8.1245D-5&8.1252D-5  \\
		1/8 & 2.0187D-5 & 2.0187D-5 & 2.0189D-5 \\
		1/16  & 5.0532D-6 & 5.0519D-6 & 5.0552D-6 \\
		1/32  & 1.2635D-6 & 1.3278D-6 & 1.2656D-6 \\
		\hline
		order  & 2.030 &2.016 & 2.030\\
		\hline
	\end{tabular}
\end{table}

\begin{figure}[htbp]
\centering
\subfloat[Temporal accuracy in 1D]{\label{all_norms_time_1D}\includegraphics[width=2.5in]{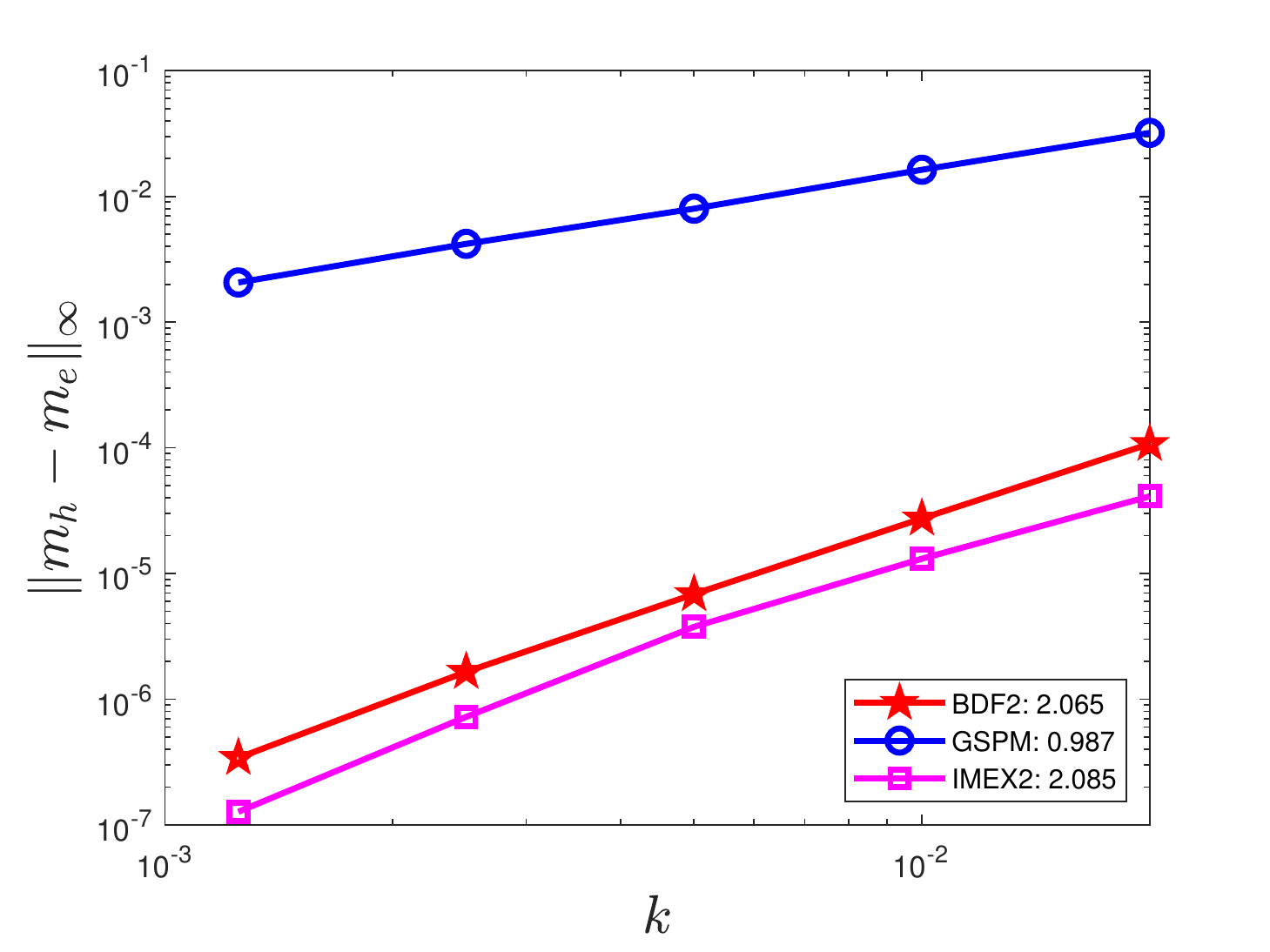}}
\subfloat[Spatial accuracy in 1D]{\label{all_norms_space_1D}\includegraphics[width=2.5in]{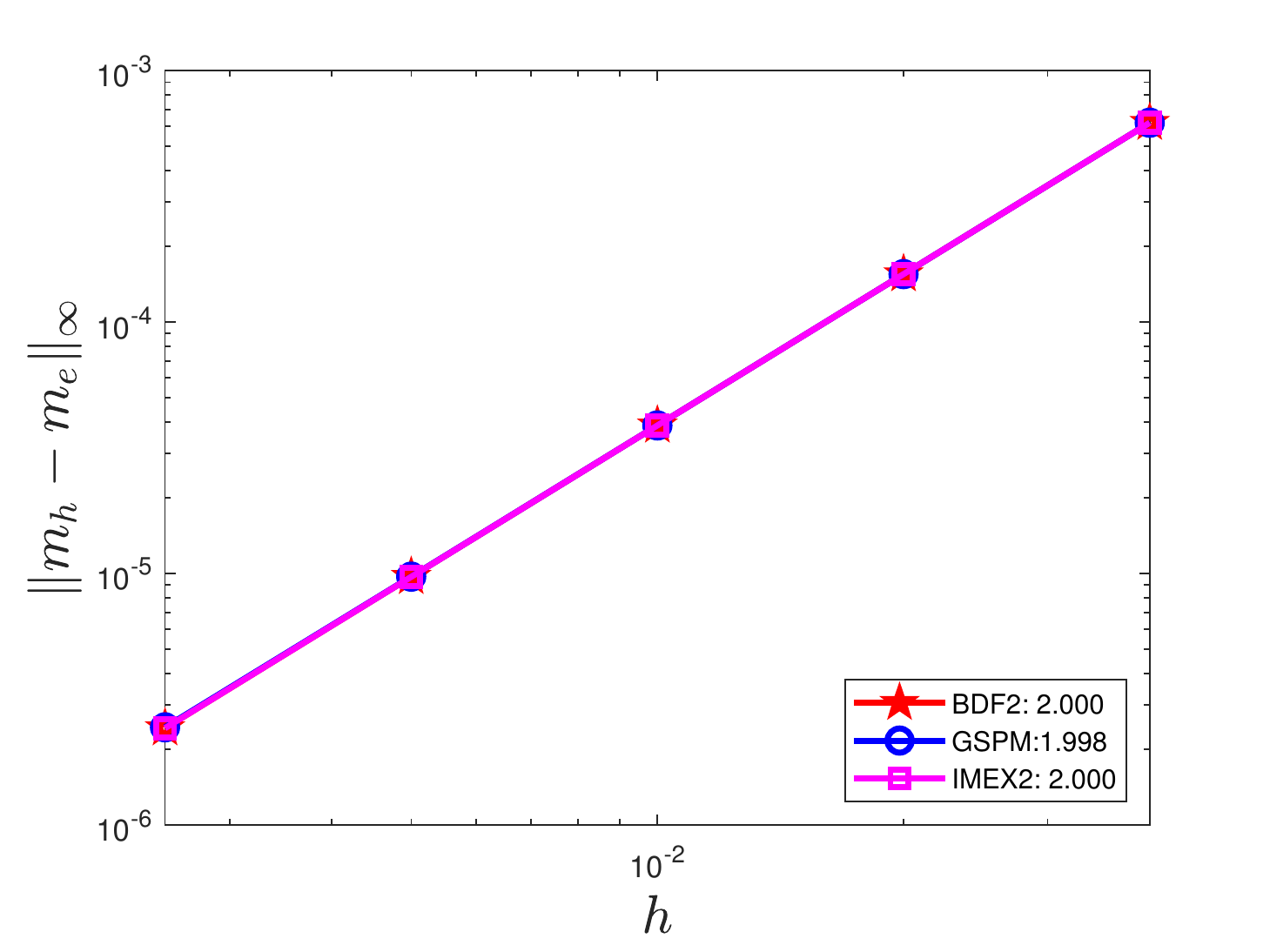}}
\caption{Accuracy test in 1D for BDF2, GSPM, and IMEX2 when $\alpha = 1D-3$. (a) Temporal accuracy when $h=5D-4$; (b) Spatial accuracy when $k=1D-6$.}\label{all_norms_1D}
\end{figure}

\begin{figure}[htbp]
\centering
\subfloat[Temporal accuracy in 3D]{\label{all_norms_time_3D}\includegraphics[width=2.5in]{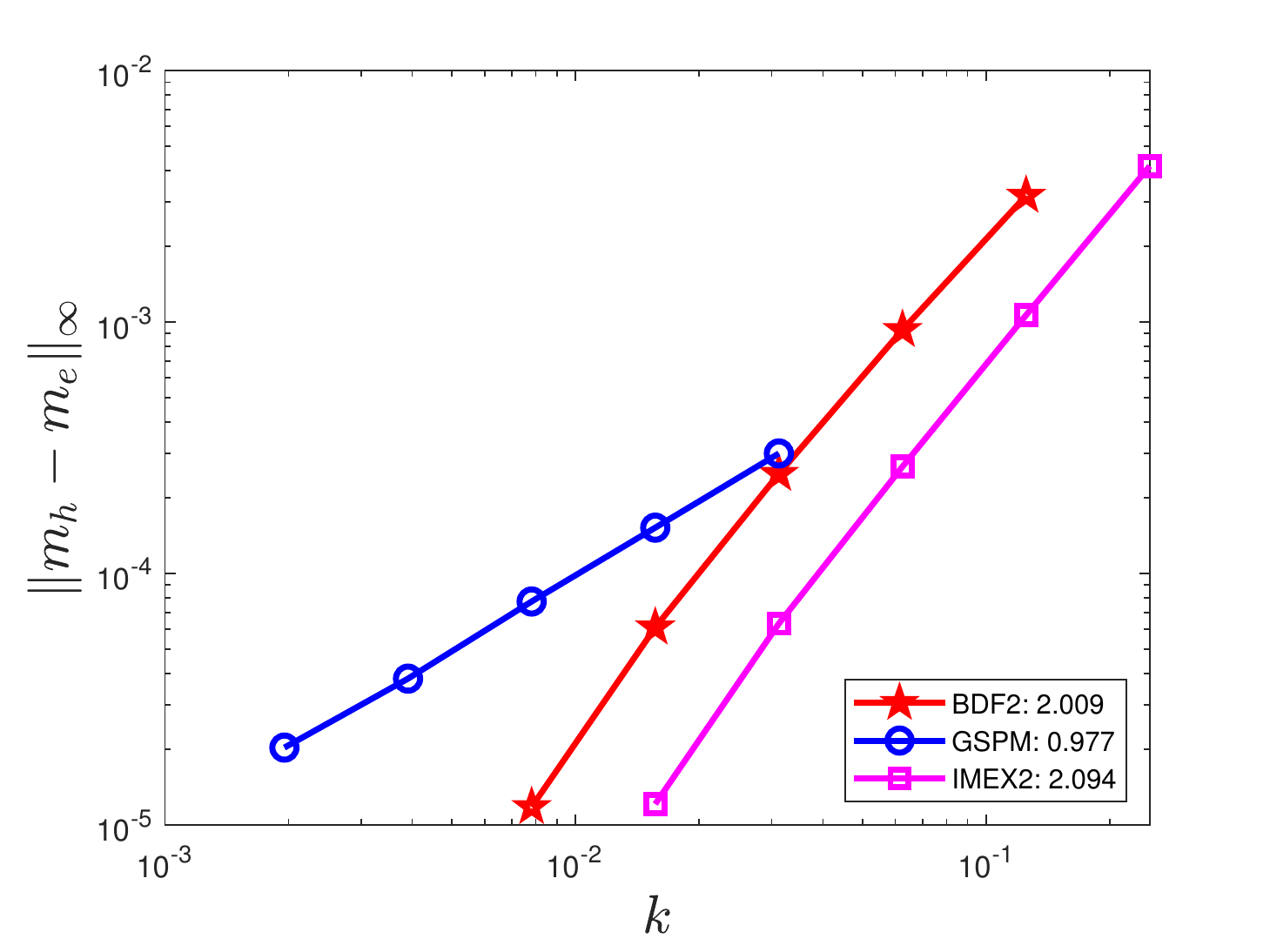}}
\subfloat[Spatial accuracy in 3D]{\label{all_norms_space_3D}\includegraphics[width=2.5in]{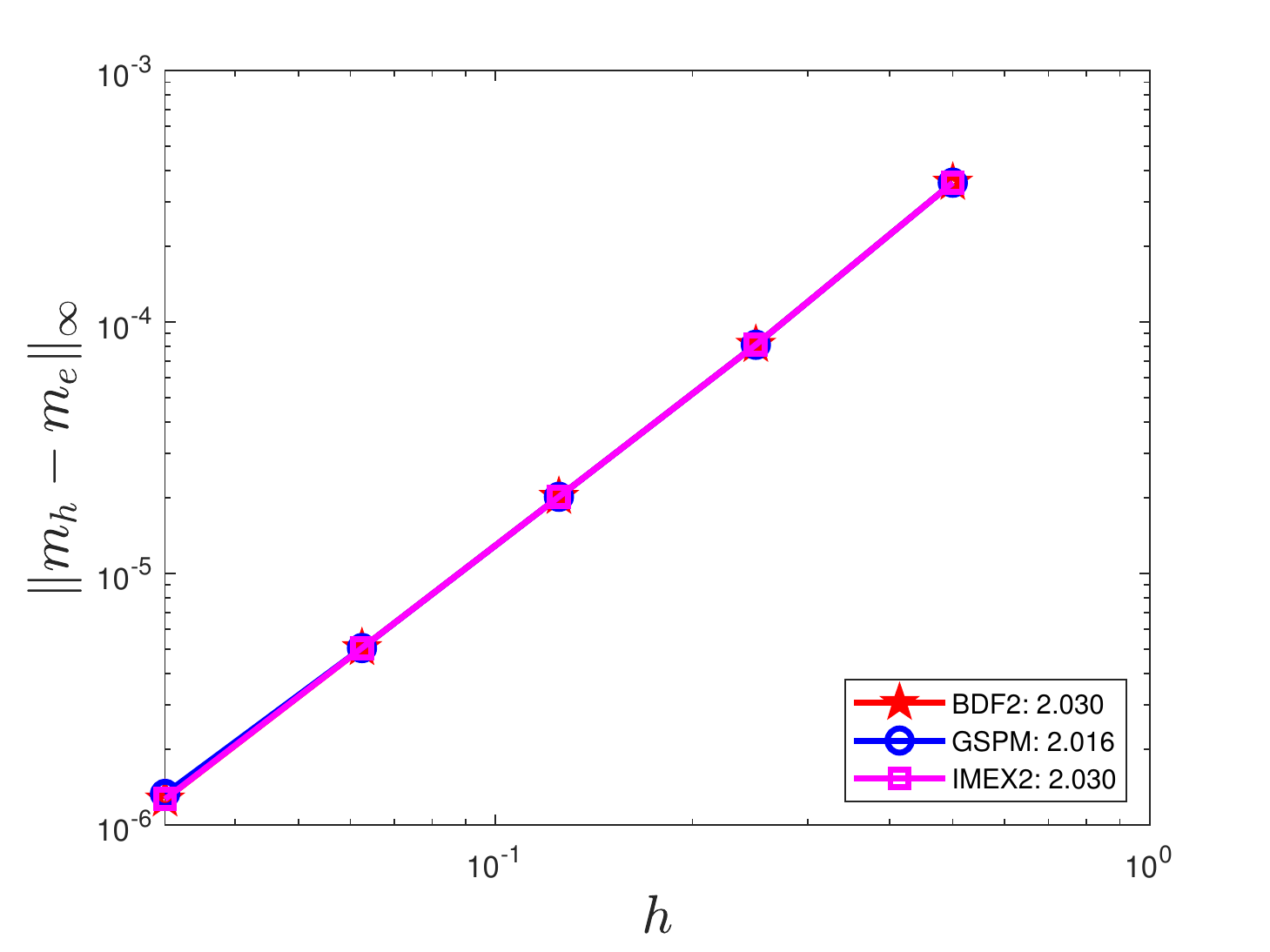}} 
\caption{Accuracy test in 3D for BDF2, GSPM, and IMEX2 when $\alpha = 1D-3$. (a) Temporal accuracy when $h_x=h_y=h_z=1/16$; 
(b) Spatial accuracy when $k=1D-4$.}\label{all_norms_3D}
\end{figure}

\subsection{Efficiency comparison}

To compare the efficiency, we plot the CPU time (in seconds) of BDF2, GSPM and IMEX2 in terms of the error $\|\m_h-\m_e\|_{\infty}$
in \cref{cpu_time_damping_1D} for the 1D case and in \cref{cpu_time_damping_3D} for the 3D case. 

In 1D, for the same tolerance, costs of BDF2, GSPM, and IMEX2 in \cref{cpu_time_damping_1D} satisfy: BDF2$<$IMEX2$<$GSPM.
In 3D, for the same tolerance, costs of BDF2, GSPM, and IMEX2 in \cref{cpu_time_damping_3D} satisfy: BDF2$\approx$IMEX2$<$GSPM.
For both cases, BDF2 is slightly better than IMEX2 since two linear systems of equations need to be solved in IMEX2 while
only one linear system needs to solved in BDF2. Both BDF2 and IMEX2 are better than GSPM.
\begin{figure}[htbp]
\centering
\subfloat[1D]{\label{cpu_time_damping_1D}\includegraphics[width=2.5in]{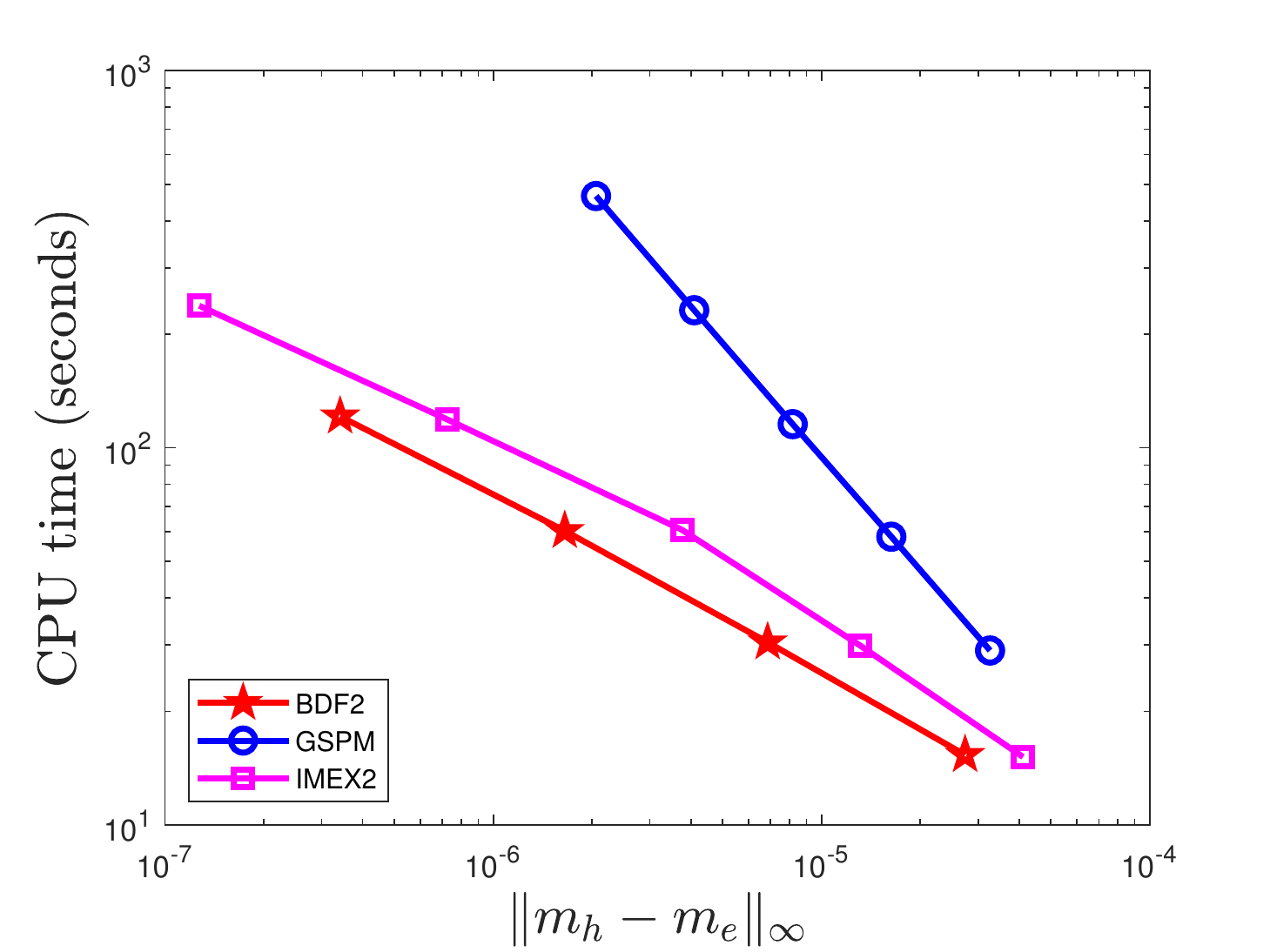}}
\subfloat[3D]{\label{cpu_time_damping_3D}\includegraphics[width=2.5in]{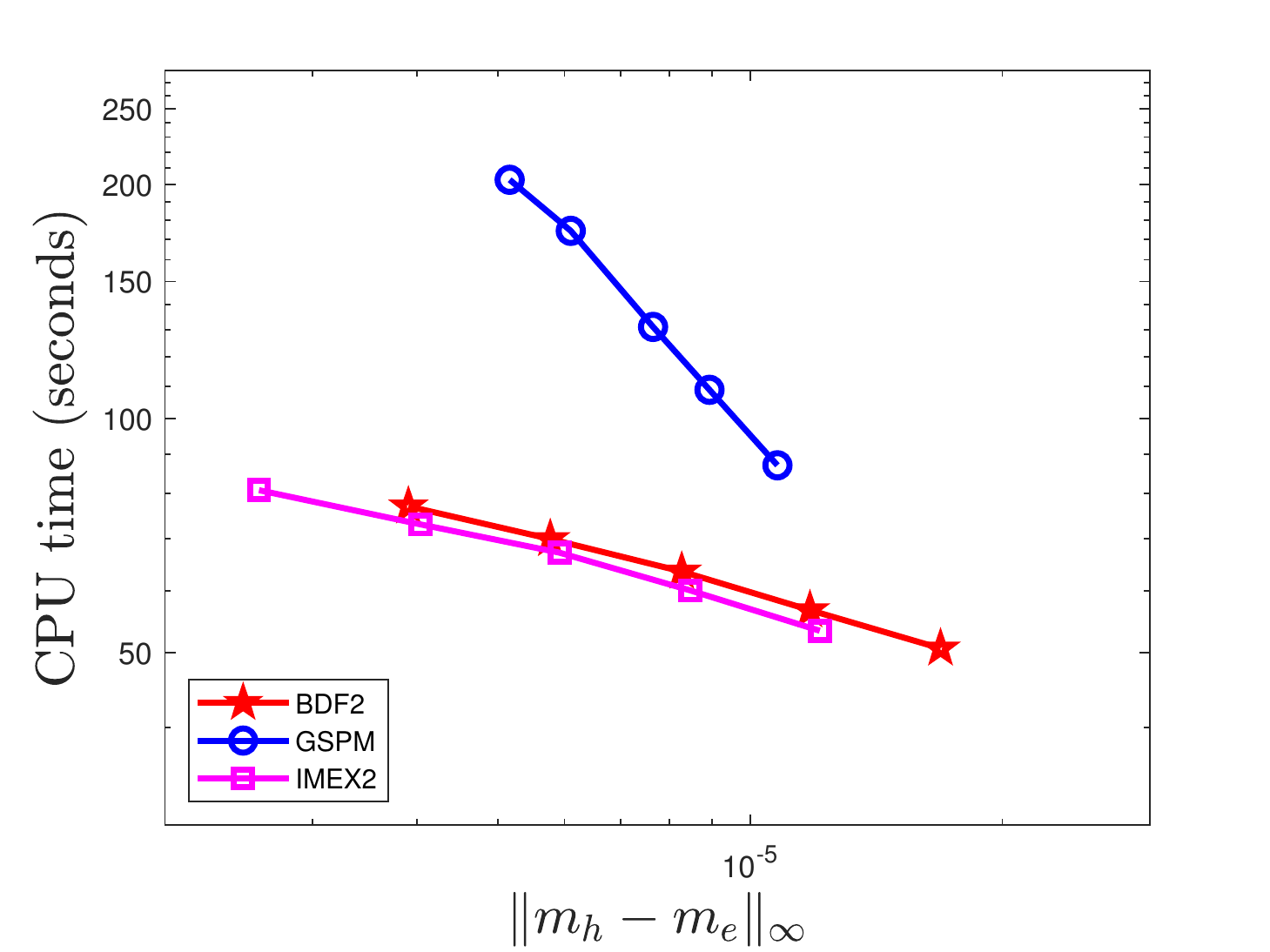}}
\caption{Efficiency test for BDF2, GSPM, and IMEX2 when $\alpha = 1D-3$ and the spatial gridsize is fixed.  
	(a) BDF2$<$IMEX2$<$GSPM when $h = 5D-4$ in 1D; (b) BDF2$\approx$ IMEX2$<$GSPM when $h_x=h_y=h_z=1/16$ in 3D.}\label{fig:efficiency}
\end{figure}

\begin{remark}
When the spatial mesh is very fine, we observe that to achieve the same tolerance, costs of BDF2, GSPM, and IMEX2 satisfy: GSPM$<$BDF2$<$IMEX2.
The reason is that fast solvers can solve symmetric linear systems with constant coefficients in GSPM, while nonsymmetric linear systems
with variable coefficients are involved in both BDF2 and IMEX2. It becomes increasingly difficult to solve such systems using the Generalized 
Minimum Residual Method (GMRES), for example. This issue will be further explored in a subsequent work. 
\end{remark}

\section{Benchmark problem from NIST}\label{section:numerical2}

To examine our methods in the realistic case, we simulate the first standard problem established by the micromagnetic modelling activity group at 
National Institute of Standards and Technology (NIST) \cite{NISTmicomagnetics}. This problem asks for simulating the hysteresis loop of a 
$L_x\times L_y\times L_z=1 \times 2\times 0.02\; \mathrm{\mu m^3}$ thin-film element with material parameters that are not too different from Permalloy. 
The hysteresis loop is obtained in the following way: A positive external field of strength $H_0=\mu_0 H_e$, in the unit of $mT$ is applied. 
The magnetization is able to reach a steady state. Once this steady state is approached, the applied external field is reduced by a certain amount, 
and the material sample is again allowed to reach a steady state. The process continues until the hysteresis system attains a negative field of strength $H_0$. 
The process then is repeated, increasing the field in small steps until it reaches the initial applied external field. 
As a consequence, we are able to plot the average magnetization at the steady state as a function of the external filed strength during the hysteresis loop. 
For BDF2, the unsymmetric linear systems of equations are solved by GMRES from library called high performance preconditioners \cite{hypreSolver} which was developed by Center for Applied Scientific Computing, Lawrence Livermore National Laboratory.

\subsection{Magnetization profile}

For comparison, we use the same setup of the available code \texttt{mo96a} of the first standard problem from NIST. 
Its setup is $100\times 50\times 1$ grid points and the canting angle $+1\degree$ of applied field from nominal axis. 
The calculation of the demagnetization field is done by FFT. The initial state is uniform. In the loop, $133$ successive
steps between $+50\;\mathrm{mT}$ and $-50\;\mathrm{mT}$ are adopted for both \textit{x-loop} and \textit{y-loop}.

Due to the presence of meta-stable symmetric states, the applied fields should be rotated one degree counterclockwise off the nominal axis. 
The damping coefficient $\alpha=0.1$, the temporal stepsize $k=1\;\mathrm{ps}$
and the cell size is $20\times 20\times 20\;\mathrm{nm^3}$. 
A stopping criterion is used to determine that a steady state is reached when the relative change in the total energy is less than $10^{-7}$. 
For \texttt{mo96a}, \cref{NIST_long_magnetization} and \cref{NIST_short_magnetization} plot the average remanent magnetization on the bottom surface of
the sample in the \textit{xy-} plane when $H_0=0$ when the applied fields are approximately parallel (canting angle $+1\degree$) to 
the \textit{y-} (long) axis and the \textit{x-} (short) axis, respectively. 
For BDF2, the corresponding results are shown in \cref{BDF2_long_magnetization} and \cref{BDF2_short_magnetization}, respectively.
The in-plane magnetization components are represented by arrows in \cref{NIST_BDF2_magnetization}. 
 
Furthermore, magnetization components are visualized by the grayscale value in \cref{NIST_BDF2_mag}. 
For \texttt{mo96a}, \cref{NIST_long_mx} - \cref{NIST_short_my} plot the \textit{x-} component and the \textit{y-} component 
when the applied field is along the \textit{y-} axis, the \textit{x-} component and
the \textit{y-} component when the applied field is along the \textit{x-} axis, respectively.
Results of BDF2 are shown in \Cref{BDF_long_mx,BDF_long_my,BDF_short_mx,BDF_short_my}.

From \cref{NIST_BDF2_magnetization} and \cref{NIST_BDF2_mag}, we observe that results of BDF2 are in qualitative agreements with those of \texttt{mo96a}.

\begin{figure}[htbp]
%	\centering
	\flushright
	\subfloat[Applied field parallel to the long axis]{\label{NIST_long_magnetization}\includegraphics[width=2.0in]{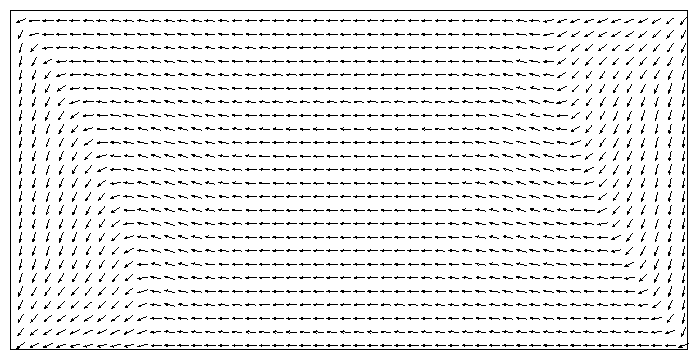}}
	\hspace{0.45in}
	\subfloat[Applied field parallel to the short axis]{\label{NIST_short_magnetization}\includegraphics[width=2.0in]{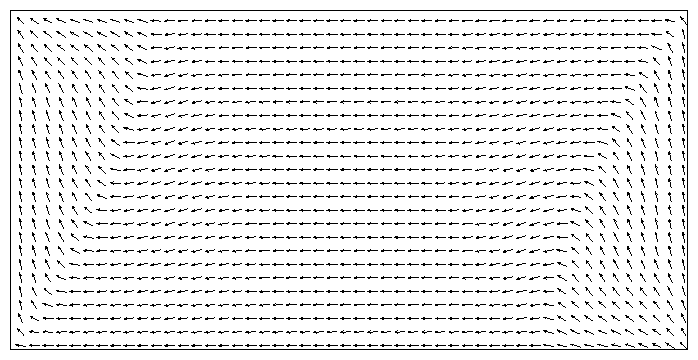}}
	\quad
	\subfloat[Applied field parallel to the long axis]{\label{BDF2_long_magnetization}\includegraphics[width=2.5in]{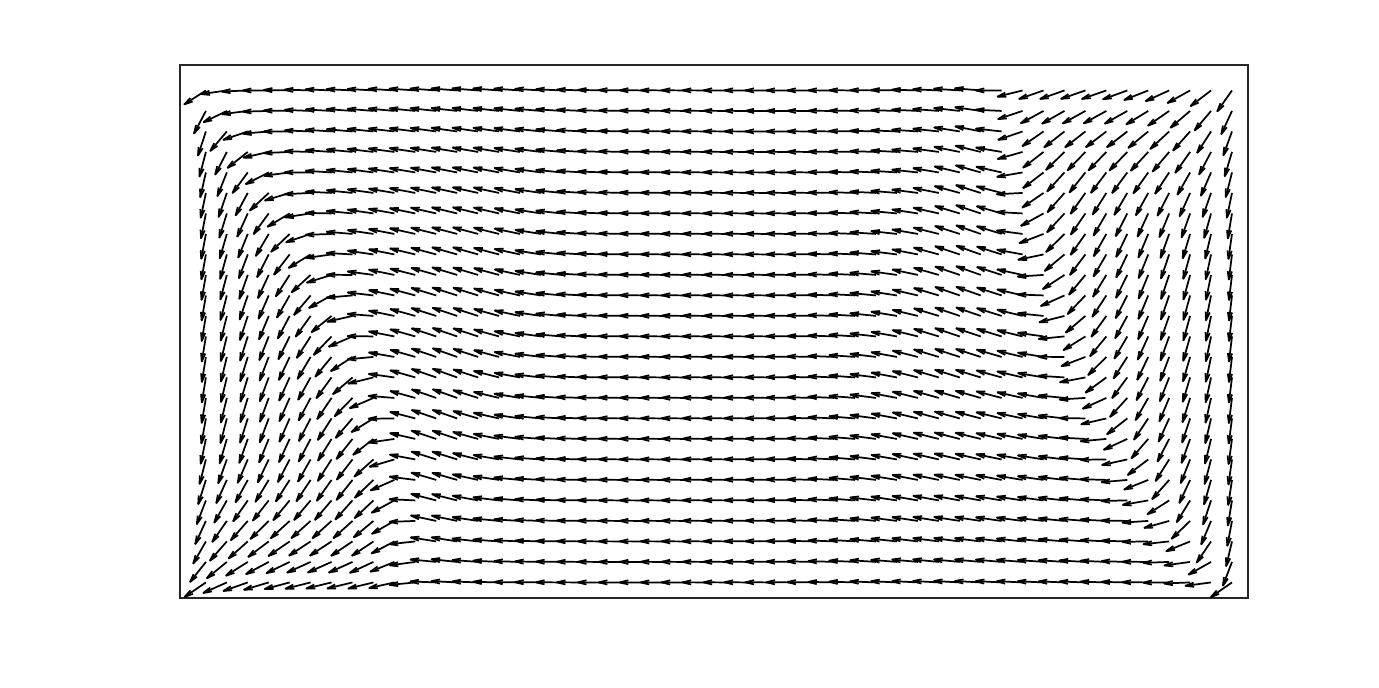}}
	\subfloat[Applied field $H_0$ parallel the short axis]{\label{BDF2_short_magnetization}\includegraphics[width=2.5in]{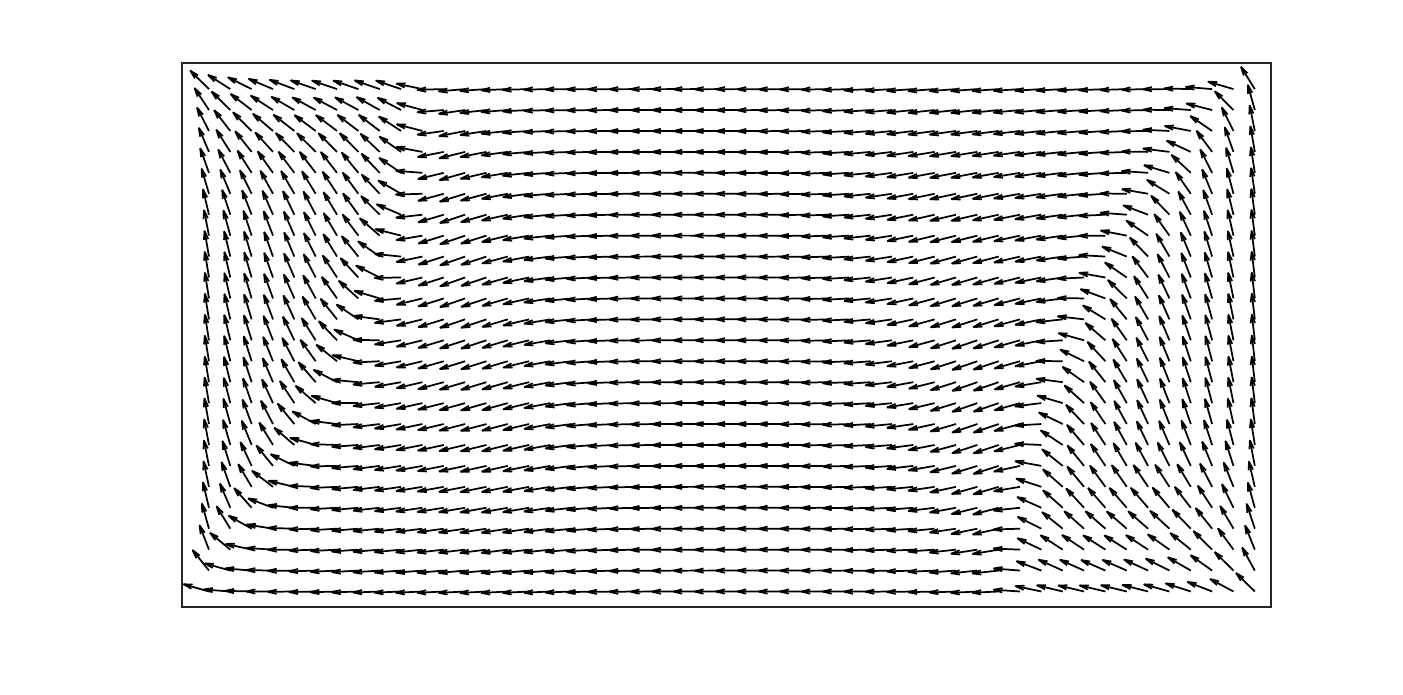}}
	\caption{Remanent magnetization when $\alpha=0.1$ for the bottom surface in the $xy$ plane. The applied field is approximately parallel (canting angle $+1\degree$) to the \textit{y-} (long) axis (left column) and the \textit{x-} (short) axis (right column). Top row: \texttt{mo96a}; Bottom row: BDF2. The in-plane magnetization components are represented by arrows.}
	\label{NIST_BDF2_magnetization}
\end{figure}

\begin{figure}[htbp]
	\centering
	\subfloat[$m_x\; (H_0 // \textit{y-axis})$ ]{\label{NIST_long_mx}\includegraphics[width=0.72in]{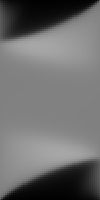}}
	\hspace{0.21in}
	\subfloat[$m_y\; (H_0 // \textit{y-axis})$ ]{\label{NIST_long_my}\includegraphics[width=0.72in]{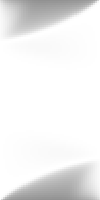}}
	\hspace{0.21in}
	\subfloat[$m_x\; (H_0 // \textit{x-axis})$ ]{\label{NIST_short_mx}\includegraphics[width=0.72in]{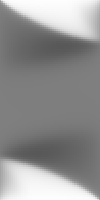}}
	\hspace{0.21in}
	\subfloat[$m_y\;(H_0 // \textit{x-axis})$ ]{\label{NIST_short_my}\includegraphics[width=0.72in]{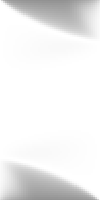}}
	\quad
	\subfloat[$m_x\; (H_0 // \textit{y-axis})$ ]{\label{BDF_long_mx}\includegraphics[width=1in]{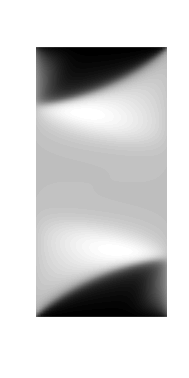}}
%	\hspace{0.2in}
	\subfloat[$m_y\; (H_0 // \textit{y-axis})$ ]{\label{BDF_long_my}\includegraphics[width=1in]{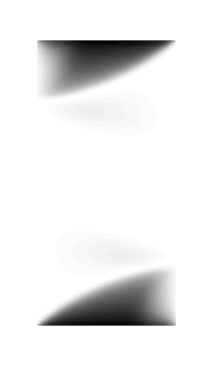}}
	\subfloat[$m_x\; (H_0 // \textit{x-axis})$ ]{\label{BDF_short_mx}\includegraphics[width=1in]{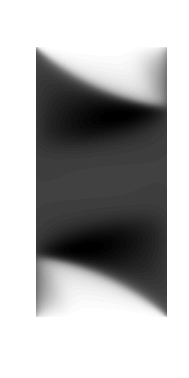}}
	\subfloat[$m_y\; (H_0 // \textit{x-axis})$ ]{\label{BDF_short_my}\includegraphics[width=1in]{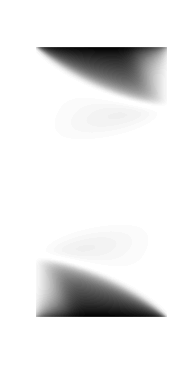}}
	\caption{ Remanent magnetization when $\alpha=0.1$ for the bottom surface in the $xy$ plane. The applied field is approximately parallel (canting angle $+1\degree$) to the \textit{y-} (long) axis (left two columns) and the \textit{x-} (short) axis (right two columns). Top row: \texttt{mo96a}; Bottom row: BDF2. The \textit{x-} and \textit{y-} magnetization components are visualized by the gray value.}
	\label{NIST_BDF2_mag}
\end{figure}

\subsection{Hysteresis loop}

Hysteresis loops generated by the code \texttt{mo96a} are shown in \cref{NIST_long_loop} 
when the applied field is approximately parallel to the long axis and in \cref{NIST_short_loop} 
when the applied field is approximately parallel to the short axis, respectively.
The average remanent magnetization in reduced units is $(-0.15,0.87,0.00)$ for the \textit{y-loop}
and $(0.15,0.87,0.00)$ for the \textit{x-loop}. The coercive fields are $4.9\;\mathrm{mT}$ 
in \cref{NIST_long_loop} and $2.5\;\mathrm{mT}$ in \cref{NIST_short_loop}.

For BDF2, hysteresis loops are presented in \cref{BDF2_long_loop} when the applied field is 
approximately parallel to the long axis and in \cref{BDF2_short_loop} when the applied field 
is approximately parallel to the short axis, respectively.
The average remanent magnetization in reduced units is $(-1.613\times10^{-1},8.606\times10^{-1},-9.940\times 10^{-5})$ 
for the \textit{y-loop} and $(1.681\times 10^{-1},8.571\times 10^{-1},-2.281\times 10^{-3})$
for the \textit{x-loop}. The coercive fields are $5.213\,(\pm 0.4) \;\mathrm{mT}$ in \cref{BDF2_long_loop} 
and $2.552\,(\pm 0.4)\;\mathrm{mT}$ in \cref{BDF2_short_loop}.

Based on these results, we conclude that results of BDF2 agree well with those of NIST, both qualitatively and quantitatively.

\begin{figure}[htbp]
	\centering
	\subfloat[$H_0 // \textit{y-axis}$]{\label{NIST_long_loop}\includegraphics[width=2.4in]{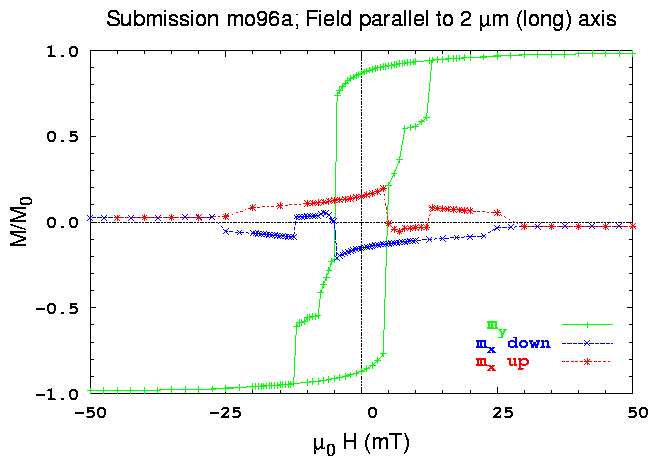}}
	\subfloat[$H_0 // \textit{x-axis}$]{\label{NIST_short_loop}\includegraphics[width=2.4in]{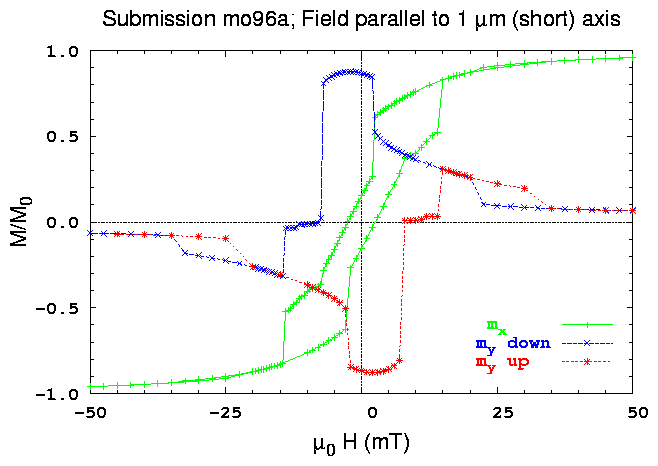}}
	\quad
	\subfloat[$H_0 // \textit{y-axis}$]{\label{BDF2_long_loop}\includegraphics[width=2.5in]{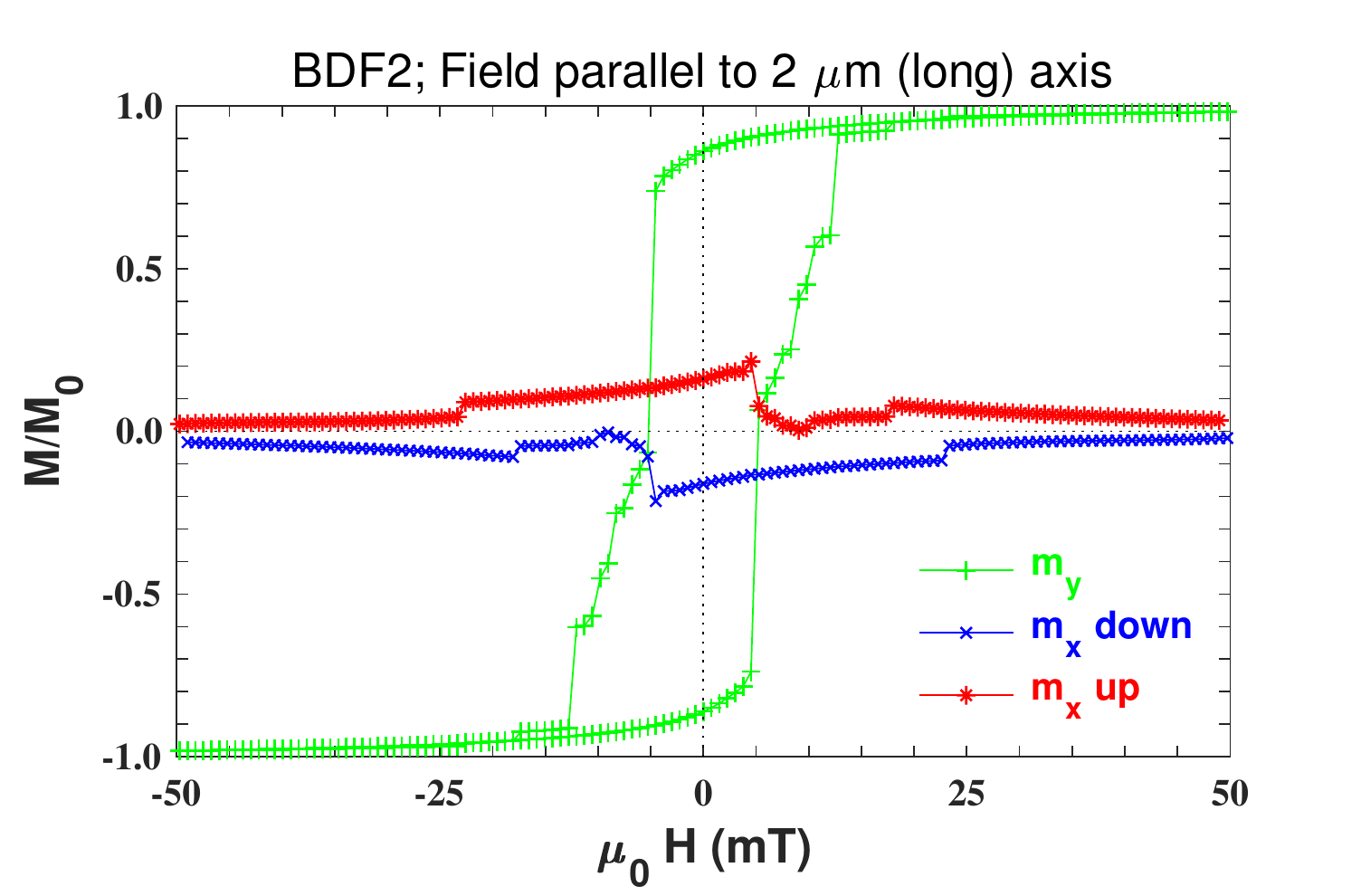}}
	\subfloat[$H_0 // \textit{x-axis}$]{\label{BDF2_short_loop}\includegraphics[width=2.5in]{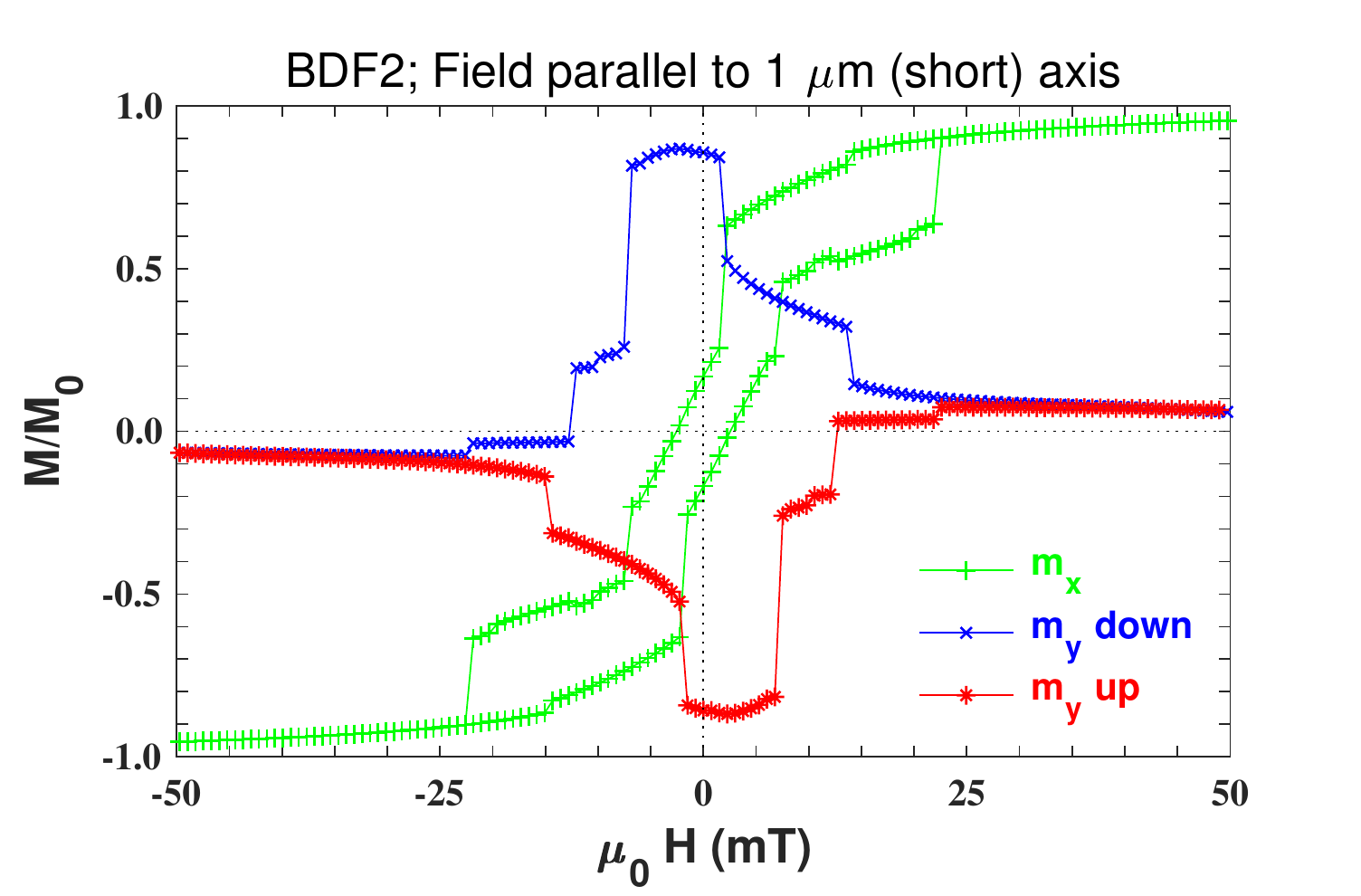}}
	\caption{ Hysteresis loop when $\alpha=0.1$ and the cell size is $20\;\textrm{nm}\times 20 \;\textrm{nm}\times 20\;\textrm{nm}$. 
The applied field is approximately parallel (canting angle $+1\degree$) to the \textit{y-} (long) axis (left column) and the \textit{x-} (short) axis (right column). Top row: \texttt{mo96a}; Bottom row: BDF2.}
	\label{GSPM}
\end{figure}

\section{Conclusions}\label{section:conclusion}

In this paper, we proposed two second-order semi-implicit projection methods to solve the Landau-Lifshitz-Gilbert equation, 
which possess unconditionally unique solvability at each time step. Examples in both 1D and 3D are used to verify the accuracy 
and the efficiency of both schemes. Additionally, the first benchmark problem from NIST is used to check the applicability of our
methods in the realistic situation using the full Landau-Lifshitiz-Gilbert equation. Results show that our schemes can produce
the correct hysteresis loop with quantities of interest agreeing with other methods in a quantitative manner.

One issue associated to the proposed methods is that nonsymmetric linear systems with variable coefficients are involved 
at each time step. It becomes increasingly difficult to solve such systems using the Generalized 
Minimum Residual Method. However, such linear systems have some unique features, as a consequence of the unique structure
at the continuous level. This shall be used to develop more efficient linear solvers. Meanwhile,
the technique presented here may be applicable to the model for current-driven domain wall dynamics~\cite{ChenGarciaCerveraYang:2015}
and the Schr\"{o}dinger-Landau-Lifshitz system \cite{ChenLiuZhou:2016}, which shall be explored later.

\section*{Acknowledgments}
This work is supported in part by the grants NSFC 21602149 (J.~Chen), AFORS grant FA9550-18-1-0095 (C.~J.~Garc\'{i}a-Cervera), NSF DMS-1418689 (C.~Wang), the Postgraduate Research and Practice Innovation Program of Jiangsu Province via grant KYCX19\_1947 (C.~Xie), and NSFC 11801016 (Z.~Zhou).

\bibliographystyle{model1-num-names}
\bibliography{mybibfile}

\begin{thebibliography}{43}
\expandafter\ifx\csname natexlab\endcsname\relax\def\natexlab#1{#1}\fi
\providecommand{\url}[1]{\texttt{#1}}
\providecommand{\href}[2]{#2}
\providecommand{\path}[1]{#1}
\providecommand{\DOIprefix}{doi:}
\providecommand{\ArXivprefix}{arXiv:}
\providecommand{\URLprefix}{URL: }
\providecommand{\Pubmedprefix}{pmid:}
\providecommand{\doi}[1]{\href{http://dx.doi.org/#1}{\path{#1}}}
\providecommand{\Pubmed}[1]{\href{pmid:#1}{\path{#1}}}
\providecommand{\bibinfo}[2]{#2}
\ifx\xfnm\relax \def\xfnm[#1]{\unskip,\space#1}\fi
%Type = Article
\bibitem[{\ifmmode \check{Z}\else \v{Z}\fi{}uti\ifmmode~\acute{c}\else
  \'{c}\fi{} et~al.(2004)\ifmmode \check{Z}\else
  \v{Z}\fi{}uti\ifmmode~\acute{c}\else \'{c}\fi{}, Fabian, and
  Das~Sarma}]{ZuticFabianDasSarma:2004}
\bibinfo{author}{I.~\ifmmode \check{Z}\else
  \v{Z}\fi{}uti\ifmmode~\acute{c}\else \'{c}\fi{}},
  \bibinfo{author}{J.~Fabian}, \bibinfo{author}{S.~Das~Sarma},
\newblock \bibinfo{title}{Spintronics: Fundamentals and applications},
\newblock \bibinfo{journal}{Rev. Mod. Phys.} \bibinfo{volume}{76}
  (\bibinfo{year}{2004}) \bibinfo{pages}{323--410}.
%Type = Article
\bibitem[{Landau and Lifshitz(1935)}]{LandauLifshitz:1935}
\bibinfo{author}{L.~Landau}, \bibinfo{author}{E.~Lifshitz},
\newblock \bibinfo{title}{On the theory of the dispersion of magnetic
  permeability in ferromagnetic bodies},
\newblock \bibinfo{journal}{Phys. Z. Sowjet.} \bibinfo{volume}{8}
  (\bibinfo{year}{1935}) \bibinfo{pages}{153--169}.
%Type = Article
\bibitem[{Gilbert(1955)}]{Gilbert:1955}
\bibinfo{author}{T.~Gilbert},
\newblock \bibinfo{journal}{Phys. Rev.} \bibinfo{volume}{100}
  (\bibinfo{year}{1955}) \bibinfo{pages}{1243}. \bibinfo{note}{[Abstract only;
  full report, Armor Research Foundation Project No. A059, Supplementary
  Report, May 1, 1956 (unpublished)]}.
%Type = Article
\bibitem[{Shi et~al.(1999)Shi, Tehrani, and Scheinfein}]{Shi1999mag}
\bibinfo{author}{J.~Shi}, \bibinfo{author}{S.~Tehrani}, \bibinfo{author}{M.~R.
  Scheinfein},
\newblock \bibinfo{title}{Magnetization vortices and anomalous switching in
  patterned {NiFeCo} submicron arrays},
\newblock \bibinfo{journal}{Appl. Phys. Lett.} \bibinfo{volume}{74}
  (\bibinfo{year}{1999}) \bibinfo{pages}{2525--2527}.
%Type = Article
\bibitem[{Shi et~al.(2000)Shi, Zhu, Tehrani, Zheng, and Zhu}]{Shi2000geometry}
\bibinfo{author}{J.~Shi}, \bibinfo{author}{T.~Zhu},
  \bibinfo{author}{S.~Tehrani}, \bibinfo{author}{Y.~Zheng},
  \bibinfo{author}{J.-G. Zhu},
\newblock \bibinfo{title}{Geometry dependence of magnetization vortices in
  patterned submicron {NiFe} elements},
\newblock \bibinfo{journal}{Appl. Phys. Lett.} \bibinfo{volume}{76}
  (\bibinfo{year}{2000}) \bibinfo{pages}{2588--2590}.
%Type = Article
\bibitem[{Kruz\'{i}k and Prohl(2006)}]{Kruzik:2006}
\bibinfo{author}{M.~Kruz\'{i}k}, \bibinfo{author}{A.~Prohl},
\newblock \bibinfo{title}{Recent developments in the modeling, analysis, and
  numerics of ferromagnetism},
\newblock \bibinfo{journal}{SIAM Rev.} \bibinfo{volume}{48}
  (\bibinfo{year}{2006}) \bibinfo{pages}{439--483}.
%Type = Article
\bibitem[{Cimr{\'a}k(2008)}]{cimrak2007survey}
\bibinfo{author}{I.~Cimr{\'a}k},
\newblock \bibinfo{title}{A survey on the numerics and computations for the
  {Landau-Lifshitz} equation of micromagnetism},
\newblock \bibinfo{journal}{Arch. Comput. Method Eng.} \bibinfo{volume}{15}
  (\bibinfo{year}{2008}) \bibinfo{pages}{277--309}.
%Type = Article
\bibitem[{Abert et~al.(2014)Abert, Hrkac, Page, Praetorius, Ruggeri, and
  Suess}]{Abert2014}
\bibinfo{author}{C.~Abert}, \bibinfo{author}{G.~Hrkac},
  \bibinfo{author}{M.~Page}, \bibinfo{author}{D.~Praetorius},
  \bibinfo{author}{M.~Ruggeri}, \bibinfo{author}{D.~Suess},
\newblock \bibinfo{title}{Spin-polarized transport in ferromagnetic
  multilayers: An unconditionally convergent {FEM} integrator},
\newblock \bibinfo{journal}{Comput. Math. Appl.} \bibinfo{volume}{68}
  (\bibinfo{year}{2014}) \bibinfo{pages}{639--654}.
%Type = Article
\bibitem[{Bruckner et~al.(2014)Bruckner, Suess, Feischl, F\"{u}hrer, Goldenits,
  Page, Praetorius, and Ruggeri}]{MR3260281}
\bibinfo{author}{F.~Bruckner}, \bibinfo{author}{D.~Suess},
  \bibinfo{author}{M.~Feischl}, \bibinfo{author}{T.~F\"{u}hrer},
  \bibinfo{author}{P.~Goldenits}, \bibinfo{author}{M.~Page},
  \bibinfo{author}{D.~Praetorius}, \bibinfo{author}{M.~Ruggeri},
\newblock \bibinfo{title}{Multiscale modeling in micromagnetics: Existence of
  solutions and numerical integration},
\newblock \bibinfo{journal}{Math. Models Methods Appl. Sci.}
  \bibinfo{volume}{24} (\bibinfo{year}{2014}) \bibinfo{pages}{2627--2662}.
%Type = Article
\bibitem[{Ba\ifmmode~\check{n}\else \v{n}\fi{}as
  et~al.(2014)Ba\ifmmode~\check{n}\else \v{n}\fi{}as, Page, Praetorius, and
  Rochat}]{PageDecoupled:2014}
\bibinfo{author}{L.~Ba\ifmmode~\check{n}\else \v{n}\fi{}as},
  \bibinfo{author}{M.~Page}, \bibinfo{author}{D.~Praetorius},
  \bibinfo{author}{J.~Rochat},
\newblock \bibinfo{title}{A decoupled and unconditionally convergent linear
  {FEM} integrator for the {Landau–Lifshitz–Gilbert} equation with
  magnetostriction},
\newblock \bibinfo{journal}{IMA J. Numer. Anal.} \bibinfo{volume}{34}
  (\bibinfo{year}{2014}) \bibinfo{pages}{1361--1385}.
%Type = Article
\bibitem[{Aurada et~al.(2015)Aurada, Melenk, and Praetorius}]{Aurada:2015}
\bibinfo{author}{M.~Aurada}, \bibinfo{author}{J.~Melenk},
  \bibinfo{author}{D.~Praetorius},
\newblock \bibinfo{title}{{FEM–BEM} coupling for the {large-body} limit in
  micromagnetics},
\newblock \bibinfo{journal}{J. Comput. Appl. Math.} \bibinfo{volume}{281}
  (\bibinfo{year}{2015}) \bibinfo{pages}{10--31}.
%Type = Article
\bibitem[{Le et~al.(2015)Le, Page, Praetorius, and Tran}]{KimPage2015}
\bibinfo{author}{K.-N. Le}, \bibinfo{author}{M.~Page},
  \bibinfo{author}{D.~Praetorius}, \bibinfo{author}{T.~Tran},
\newblock \bibinfo{title}{On a decoupled linear {FEM} integrator for
  {eddy-current-LLG}},
\newblock \bibinfo{journal}{Appl. Anal.} \bibinfo{volume}{94}
  (\bibinfo{year}{2015}) \bibinfo{pages}{1051--1067}.
%Type = Article
\bibitem[{Ba\ifmmode~\check{n}\else \v{n}\fi{}as
  et~al.(2015)Ba\ifmmode~\check{n}\else \v{n}\fi{}as, Page, and
  Praetorius}]{BanasPage:2015}
\bibinfo{author}{L.~Ba\ifmmode~\check{n}\else \v{n}\fi{}as},
  \bibinfo{author}{M.~Page}, \bibinfo{author}{D.~Praetorius},
\newblock \bibinfo{title}{A convergent linear finite element scheme for the
  {Maxwell-Landau-Lifshitz-Gilbert} equation},
\newblock \bibinfo{journal}{Electron. Trans. Numer. Anal.} \bibinfo{volume}{44}
  (\bibinfo{year}{2015}) \bibinfo{pages}{250--270}.
%Type = Article
\bibitem[{Praetorius et~al.(2018)Praetorius, Ruggeri, and
  Stiftner}]{Praetorius:2018}
\bibinfo{author}{D.~Praetorius}, \bibinfo{author}{M.~Ruggeri},
  \bibinfo{author}{B.~Stiftner},
\newblock \bibinfo{title}{Convergence of an implicit–explicit midpoint scheme
  for computational micromagnetics},
\newblock \bibinfo{journal}{Comput. Math. Appl.} \bibinfo{volume}{75}
  (\bibinfo{year}{2018}) \bibinfo{pages}{1719--1738}.
%Type = Article
\bibitem[{Alouges and Jaisson(2006)}]{alouges2006convergence}
\bibinfo{author}{F.~Alouges}, \bibinfo{author}{P.~Jaisson},
\newblock \bibinfo{title}{Convergence of a finite element discretization for
  the {Landau-Lifshitz} equations in micromagnetism},
\newblock \bibinfo{journal}{Math. Models Methods Appl. Sci.}
  \bibinfo{volume}{16} (\bibinfo{year}{2006}) \bibinfo{pages}{299--316}.
%Type = Article
\bibitem[{Alouges(2008)}]{AlougesFEM:2008}
\bibinfo{author}{F.~Alouges},
\newblock \bibinfo{title}{A new finite element scheme for {Landau-Lifchitz}
  equations},
\newblock \bibinfo{journal}{Discrete Contin. Dyn. Syst. Ser. S}
  \bibinfo{volume}{1} (\bibinfo{year}{2008}) \bibinfo{pages}{187--196}.
%Type = Article
\bibitem[{Alouges et~al.(2014{\natexlab{a}})Alouges, Kritsikis, Steiner, and
  Toussaint}]{Alouges2014}
\bibinfo{author}{F.~Alouges}, \bibinfo{author}{E.~Kritsikis},
  \bibinfo{author}{J.~Steiner}, \bibinfo{author}{J.-C. Toussaint},
\newblock \bibinfo{title}{A convergent and precise finite element scheme for
  {Landau--Lifschitz--Gilbert} equation},
\newblock \bibinfo{journal}{Numer. Math.} \bibinfo{volume}{128}
  (\bibinfo{year}{2014}{\natexlab{a}}) \bibinfo{pages}{407--430}.
%Type = Article
\bibitem[{Alouges et~al.(2014{\natexlab{b}})Alouges, de~Bouard, and
  Hocquet}]{AlougesStoch2014}
\bibinfo{author}{F.~Alouges}, \bibinfo{author}{A.~de~Bouard},
  \bibinfo{author}{A.~Hocquet},
\newblock \bibinfo{title}{A {semi-discrete} scheme for the stochastic
  {Landau--Lifshitz} equation},
\newblock \bibinfo{journal}{Stoch. Partial Differ. Equ. Anal. Comput.}
  \bibinfo{volume}{2} (\bibinfo{year}{2014}{\natexlab{b}})
  \bibinfo{pages}{281--315}.
%Type = Article
\bibitem[{E and Wang(2000)}]{weinan2001numerical}
\bibinfo{author}{W.~E}, \bibinfo{author}{X.~Wang},
\newblock \bibinfo{title}{Numerical methods for the {Landau-Lifshitz}
  equation},
\newblock \bibinfo{journal}{SIAM J. Numer. Anal.} \bibinfo{volume}{38}
  (\bibinfo{year}{2000}) \bibinfo{pages}{1647--1665}.
%Type = Article
\bibitem[{Fuwa et~al.(2012)Fuwa, Ishiwata, and Tsutsumi}]{fuwa2012finite}
\bibinfo{author}{A.~Fuwa}, \bibinfo{author}{T.~Ishiwata},
  \bibinfo{author}{M.~Tsutsumi},
\newblock \bibinfo{title}{Finite difference scheme for the {Landau-Lifshitz}
  equation},
\newblock \bibinfo{journal}{Japan J. Indust. Appl. Math.} \bibinfo{volume}{29}
  (\bibinfo{year}{2012}) \bibinfo{pages}{83--110}.
%Type = Article
\bibitem[{Jeong and Kim(2010)}]{Jeong2010613}
\bibinfo{author}{D.~Jeong}, \bibinfo{author}{J.~Kim},
\newblock \bibinfo{title}{A {Crank–Nicolson} scheme for the
  {Landau–Lifshitz} equation without damping},
\newblock \bibinfo{journal}{J. Comput. Appl. Math.} \bibinfo{volume}{234}
  (\bibinfo{year}{2010}) \bibinfo{pages}{613--623}.
%Type = Article
\bibitem[{Kim and Lipnikov(2017)}]{KIM2017109}
\bibinfo{author}{E.~Kim}, \bibinfo{author}{K.~Lipnikov},
\newblock \bibinfo{title}{The mimetic finite difference method for the
  {Landau–Lifshitz} equation},
\newblock \bibinfo{journal}{J. Comput. Phys.} \bibinfo{volume}{328}
  (\bibinfo{year}{2017}) \bibinfo{pages}{109--130}.
%Type = Article
\bibitem[{Romeo et~al.(2008)Romeo, Finocchio, Carpentieri, Torres, Consolo, and
  Azzerboni}]{ROMEO2008464}
\bibinfo{author}{A.~Romeo}, \bibinfo{author}{G.~Finocchio},
  \bibinfo{author}{M.~Carpentieri}, \bibinfo{author}{L.~Torres},
  \bibinfo{author}{G.~Consolo}, \bibinfo{author}{B.~Azzerboni},
\newblock \bibinfo{title}{A numerical solution of the magnetization reversal
  modeling in a permalloy thin film using fifth order {Runge–Kutta} method
  with adaptive step size control},
\newblock \bibinfo{journal}{Phys. B: Condens. Matter} \bibinfo{volume}{403}
  (\bibinfo{year}{2008}) \bibinfo{pages}{464 -- 468}.
%Type = Article
\bibitem[{Jiang et~al.(2001)Jiang, Kaper, and Leaf}]{jiang2001hysteresis}
\bibinfo{author}{J.~S. Jiang}, \bibinfo{author}{H.~G. Kaper},
  \bibinfo{author}{G.~K. Leaf},
\newblock \bibinfo{title}{Hysteresis in layered spring magnets},
\newblock \bibinfo{journal}{Discrete Continuous Dyn. Syst. Ser. B.}
  \bibinfo{volume}{1} (\bibinfo{year}{2001}) \bibinfo{pages}{219--232}.
%Type = Book
\bibitem[{Prohl(2001)}]{Prohl2001Computational}
\bibinfo{author}{A.~Prohl}, \bibinfo{title}{Computational micromagnetism},
  volume~\bibinfo{volume}{1}, \bibinfo{publisher}{Springer},
  \bibinfo{year}{2001}.
%Type = Article
\bibitem[{Bartels and Prohl(2006)}]{bartels2006convergence}
\bibinfo{author}{S.~Bartels}, \bibinfo{author}{A.~Prohl},
\newblock \bibinfo{title}{Convergence of an implicit finite element method for
  the {Landau-Lifshitz-Gilbert} equation},
\newblock \bibinfo{journal}{SIAM J. Numer. Anal.} \bibinfo{volume}{44}
  (\bibinfo{year}{2006}) \bibinfo{pages}{1405--1419}.
%Type = Article
\bibitem[{Wang et~al.(2001)Wang, Garc{\'\i}a-Cervera, and E}]{wang2001gauss}
\bibinfo{author}{X.~Wang}, \bibinfo{author}{C.~J. Garc{\'\i}a-Cervera},
  \bibinfo{author}{W.~E},
\newblock \bibinfo{title}{A {Gauss-Seidel} projection method for micromagnetics
  simulations},
\newblock \bibinfo{journal}{J. Comput. Phys.} \bibinfo{volume}{171}
  (\bibinfo{year}{2001}) \bibinfo{pages}{357--372}.
%Type = Article
\bibitem[{Garc{\'\i}a-Cervera and E(2003)}]{Garcia2001Improved}
\bibinfo{author}{C.~J. Garc{\'\i}a-Cervera}, \bibinfo{author}{W.~E},
\newblock \bibinfo{title}{Improved {Gauss-Seidel} projection method for
  micromagnetics simulations},
\newblock \bibinfo{journal}{IEEE Trans. Magn.} \bibinfo{volume}{39}
  (\bibinfo{year}{2003}) \bibinfo{pages}{1766--1770}.
%Type = Article
\bibitem[{Lewis and Nigam(2003)}]{lewis2003geometric}
\bibinfo{author}{D.~Lewis}, \bibinfo{author}{N.~Nigam},
\newblock \bibinfo{title}{Geometric integration on spheres and some interesting
  applications},
\newblock \bibinfo{journal}{J. Comput. Appl. Math.} \bibinfo{volume}{151}
  (\bibinfo{year}{2003}) \bibinfo{pages}{141--170}.
%Type = Article
\bibitem[{Cimr{\'a}k(2005)}]{cimrak2005error}
\bibinfo{author}{I.~Cimr{\'a}k},
\newblock \bibinfo{title}{Error estimates for a semi-implicit numerical scheme
  solving the {Landau-Lifshitz} equation with an exchange field},
\newblock \bibinfo{journal}{IMA J. Numer. Anal.} \bibinfo{volume}{25}
  (\bibinfo{year}{2005}) \bibinfo{pages}{611--634}.
%Type = Article
\bibitem[{Gao(2014)}]{gao2014optimal}
\bibinfo{author}{H.~Gao},
\newblock \bibinfo{title}{Optimal error estimates of a linearized {Backward
  Euler FEM} for the {Landau-Lifshitz} equation},
\newblock \bibinfo{journal}{SIAM J. Numer. Anal.} \bibinfo{volume}{52}
  (\bibinfo{year}{2014}) \bibinfo{pages}{2574--2593}.
%Type = Article
\bibitem[{Boscarino et~al.(2016)Boscarino, Filbet, and
  Russo}]{Boscarino2016High}
\bibinfo{author}{S.~Boscarino}, \bibinfo{author}{F.~Filbet},
  \bibinfo{author}{G.~Russo},
\newblock \bibinfo{title}{High order semi-implicit schemes for time dependent
  partial differential equations},
\newblock \bibinfo{journal}{J. Sci. Comput.} \bibinfo{volume}{68}
  (\bibinfo{year}{2016}) \bibinfo{pages}{975--1001}.
%Type = Article
\bibitem[{Jeong and Kim(2014)}]{jeong2014accurate}
\bibinfo{author}{D.~Jeong}, \bibinfo{author}{J.~Kim},
\newblock \bibinfo{title}{An accurate and robust numerical method for
  micromagnetics simulations},
\newblock \bibinfo{journal}{Curr. Appl. Phys.} \bibinfo{volume}{14}
  (\bibinfo{year}{2014}) \bibinfo{pages}{476--483}.
%Type = Article
\bibitem[{Cimr{\'a}k and Slodi{\v{c}}ka(2004)}]{cimrak2004iterative}
\bibinfo{author}{I.~Cimr{\'a}k}, \bibinfo{author}{M.~Slodi{\v{c}}ka},
\newblock \bibinfo{title}{An iterative approximation scheme for the
  {Landau-Lifshitz-Gilbert} equation},
\newblock \bibinfo{journal}{J. Comput. Appl. Math.} \bibinfo{volume}{169}
  (\bibinfo{year}{2004}) \bibinfo{pages}{17--32}.
%Type = Misc
\bibitem[{Chen et~al.(2019)Chen, Wang, and Xie}]{jingrun2019analysis}
\bibinfo{author}{J.~Chen}, \bibinfo{author}{C.~Wang}, \bibinfo{author}{C.~Xie},
  \bibinfo{title}{Convergence analysis of a second-order semi-implicit
  projection method for {Landau-Lifshitz} equation}, \bibinfo{year}{2019}.
  \bibinfo{note}{ArXiv preprint arXiv: 1902.09740}.
%Type = Book
\bibitem[{Jr.(1963)}]{Brown1963micromagnetics}
\bibinfo{author}{W.~F.~B. Jr.}, \bibinfo{title}{Micromagnetics},
  \bibinfo{publisher}{Interscience Tracts on Physics and Astronomy.
  Interscience Publishers ({John Wiley} and Sons), New York-London},
  \bibinfo{year}{1963}.
%Type = Article
\bibitem[{Browder(1963)}]{browder1963nonlinear}
\bibinfo{author}{F.~E. Browder},
\newblock \bibinfo{title}{Nonlinear elliptic boundary value problems},
\newblock \bibinfo{journal}{Bull. Amer. Math. Soc.} \bibinfo{volume}{69}
  (\bibinfo{year}{1963}) \bibinfo{pages}{862--874}.
%Type = Article
\bibitem[{Minty(1963)}]{minty1963monotonicity}
\bibinfo{author}{G.~J. Minty},
\newblock \bibinfo{title}{On a monotonicity method for the solution of
  non-linear equations in {Banach} spaces},
\newblock \bibinfo{journal}{Proc. Nat. Acad. Sci.} \bibinfo{volume}{50}
  (\bibinfo{year}{1963}) \bibinfo{pages}{1038--1041}.
%Type = Book
\bibitem[{Prasolov(1994)}]{prasolov1994problems}
\bibinfo{author}{V.~V. Prasolov}, \bibinfo{title}{Problems and theorems in
  linear algebra}, volume \bibinfo{volume}{134}, \bibinfo{publisher}{Amer.
  Math. Soc.}, \bibinfo{year}{1994}.
%Type = Misc
\bibitem[{{Micromagnetic Modeling Activity Group}(2000)}]{NISTmicomagnetics}
\bibinfo{author}{{Micromagnetic Modeling Activity Group}},
  \bibinfo{title}{{National Institute of Standards and Technology}},
  \bibinfo{year}{2000}. \URLprefix
  \url{https://www.ctcms.nist.gov/~rdm/mumag.org.html}.
%Type = Misc
\bibitem[{{Center for Applied Scientific Computing, Lawrence Livermore National
  Laboratory}(1952)}]{hypreSolver}
\bibinfo{author}{{Center for Applied Scientific Computing, Lawrence Livermore
  National Laboratory}}, \bibinfo{title}{{HYPRE: Scalable Linear Solvers and
  Multigrid Methods}}, \bibinfo{year}{1952}. \URLprefix
  \url{https://computation.llnl.gov/projects/hypre-scalable-linear-solvers-multigrid-methods}.
%Type = Article
\bibitem[{Chen et~al.(2015)Chen, Garc{\'\i}a-Cervera, and
  Yang}]{ChenGarciaCerveraYang:2015}
\bibinfo{author}{J.~Chen}, \bibinfo{author}{C.~J. Garc{\'\i}a-Cervera},
  \bibinfo{author}{X.~Yang},
\newblock \bibinfo{title}{A mean-field model of spin dynamics in multilayered
  ferromagnetic media},
\newblock \bibinfo{journal}{Multiscale Model. Simul.} \bibinfo{volume}{13}
  (\bibinfo{year}{2015}) \bibinfo{pages}{551--570}.
%Type = Article
\bibitem[{Chen et~al.(2016)Chen, Liu, and Zhou}]{ChenLiuZhou:2016}
\bibinfo{author}{J.~Chen}, \bibinfo{author}{J.-G. Liu},
  \bibinfo{author}{Z.~Zhou},
\newblock \bibinfo{title}{On a {S}chr\"{o}dinger--{L}andau--{L}ifshitz system:
  Variational structure and numerical methods},
\newblock \bibinfo{journal}{Multiscale Model. Simul.} \bibinfo{volume}{14}
  (\bibinfo{year}{2016}) \bibinfo{pages}{1463--1487}.

\end{thebibliography}

\end{document}